\documentclass[acmsmall,screen,nonacm]{acmart}
\setcopyright{none}
\acmJournal{TELO}
\acmVolume{0}
\acmNumber{0}
\acmArticle{0}
\acmYear{2026}
\acmMonth{1}
\acmDOI{}

\usepackage{tikz}
\usetikzlibrary{positioning}
\usepackage{subcaption}   % replaces subfig, which clashes with acmart
\usepackage{multirow}
\usepackage{bm}
\usepackage{threeparttable}
\makeatletter
\AtBeginDocument{%
  \@ifundefined{th@acmplain}{\theoremstyle{plain}}{\theoremstyle{acmplain}}%
  \@ifundefined{theorem}{\newtheorem{theorem}{Theorem}[section]}{}%
  \@ifundefined{lemma}{\newtheorem{lemma}[theorem]{Lemma}}{}%
  \@ifundefined{corollary}{}{}%
  \@ifundefined{proposition}{\newtheorem{proposition}[theorem]{Proposition}}{}%
  \@ifundefined{conjecture}{}{}%
  \@ifundefined{th@acmdefinition}{\theoremstyle{definition}}{\theoremstyle{acmdefinition}}%
  \@ifundefined{definition}{\newtheorem{definition}[theorem]{Definition}}{}%
  \@ifundefined{example}{}{}%
  \@ifundefined{remark}{\newtheorem{remark}[theorem]{Remark}}{}%
}
\makeatother

\begin{document}

\title{The Optimal Sampler for Diverse TSP Solutions: Characterization, Limits, and Learning}

%% Authors are suppressed automatically by the "anonymous" option, but
%% keep the blocks filled in so the camera-ready needs no extra work.
\author{Ssu-Yuan Lo}
\email{lawrence090402@gmail.com}
\affiliation{%
  \institution{National Central University}
  \department{Department of Computer Science and Information Engineering}
  \country{Taiwan}}

\author{Ching-Kai Wang}
\email{a910424a@gmail.com}
\affiliation{%
  \institution{National Central University}
  \department{Department of Computer Science and Information Engineering}
  \country{Taiwan}}

\author{Kuan-Lun Chen}
\email{glchen000000@gmail.com}
\affiliation{%
  \institution{National Central University}
  \department{Department of Computer Science and Information Engineering}
  \country{Taiwan}}

\author{Hao-Tsung Yang}
\email{htyang@g.ncu.edu.tw}
\affiliation{%
  \institution{National Central University}
  \department{Department of Computer Science and Information Engineering}
  \country{Taiwan}}

\renewcommand{\shortauthors}{Lo et al.}

\begin{abstract}
We address the Diverse Traveling Salesman Problem (D-TSP) that seeks a set of $k$ distinct TSP tours. Each selected tour must not exceed a prescribed multiple of the optimal tour length, while the maximum pairwise Jaccard similarity within the selected set is minimized. This formulation is important for applications requiring both solution quality and fault tolerance, such as logistics planning, robotics, and strategic patrolling. We propose a novel hybrid framework that decomposes D-TSP into a sampling stage and a selection stage. The selection stage is a max-min dispersion problem, for which any approximation better than a factor of two is NP-hard; we therefore focus on the sampling stage. Our contributions are both theoretical and empirical. Theoretically, we give two end-to-end guarantees for the optimal sampling distribution, from two perspectives: including an optimal set in the sampled pool, or having at least $k$ mutually distant tours in the pool. The second reduces the number of samples from exponential to linear in $k$, at the cost of weakening the benchmark from the best $k$ tours to the best $\beta k$ for some $\beta > 1$. For practical use, we prove that a pipeline that samples spanning trees and then converts to tours does not hurt the quality of the pool in principle. Empirically, both objectives are learnable, and our experiments not only validate the analysis but also outperform the baselines on both in-distribution and out-of-distribution instances, by a single trained model with no fine-tuning.
\end{abstract}

%% ACM requires CCS concepts. Generate your own at
%% https://dl.acm.org/ccs/ccs_flat.cfm and paste the XML it gives you.
\begin{CCSXML}
<ccs2012>
<concept>
<concept_id>10003752.10003809.10003716.10011136.10011797.10011799</concept_id>
<concept_desc>Theory of computation~Approximation algorithms analysis</concept_desc>
<concept_significance>500</concept_significance>
</concept>
<concept>
<concept_id>10002950.10003624.10003633.10010917</concept_id>
<concept_desc>Mathematics of computing~Combinatorial optimization</concept_desc>
<concept_significance>500</concept_significance>
</concept>
<concept>
<concept_id>10010147.10010257.10010293.10011809.10011813</concept_id>
<concept_desc>Computing methodologies~Reinforcement learning</concept_desc>
<concept_significance>300</concept_significance>
</concept>
</ccs2012>
\end{CCSXML}

\ccsdesc[500]{Theory of computation~Approximation algorithms analysis}
\ccsdesc[500]{Mathematics of computing~Combinatorial optimization}
\ccsdesc[300]{Computing methodologies~Reinforcement learning}

\keywords{Traveling salesman problem, solution diversity, dispersion,
max--min packing, graph neural networks, reinforcement learning}

\received{}
\received[revised]{}
\received[accepted]{}

\maketitle

% Text of your paper here
%IMPROVE: 1. argue that most work on max-sum, wheras we're working on max-min

\section{Introduction}\label{sec:Intro}
The demand for a set of diverse solutions rather than a single one is the premise of quality-diversity and niching methods in evolutionary computation, which deliberately maintain a population spread across a behavioural or genotypic space instead of converging on a single optimum~\citep{pugh2016quality,
cully2017quality}, and it has been taken up for combinatorial problems in particular. Work in this direction falls into two families. In the first, diversity is of interest in its own right, as in diverse minimum $s$--$t$ cuts~\citep{de2023finding, gao2022obtaining}, stable matchings~\citep{deberg2025findingdiversesolutionscombinatorial}, and vertex cover~\citep{zhou2018efficient}. In the second, diversity is the means by which solution quality is secured, as in the travelling thief problem~\citep{nikfarjam2024use} and batch selection in surrogate optimization~\citep{nezami2025enhancing}. On the application side, a diversity objective also serves as a powerful heuristic across a range of domains: it
helps locate near-optimal solutions~\citep{qian2024quality, kim2021learning}; it reduces predictability while preserving the quality of a mixed strategy in adversarial settings~\citep{basilico2022recent, yang2026patrol,
sinha2018stackelberg}; it improves the robustness and stability of learned policies~\citep{yang2025towards, ganesh2025systemizing, pang2019improving,
Jiang_Wu_Cao_Zhang_2022}; and it supplies the interpretability that opaque black-box models otherwise lack~\citep{mothilal2020explaining, dandl2020multi,
chen2025pareto, nguyen2023feasible}.

In this work we address the Diverse Traveling Salesman Problem (D-TSP). That is, given an edge-weighted graph and a constant $c$, the objective of D-TSP is to identify a set $S$ of $k$ distinct TSP tours. This set must satisfy two competing criteria: first, the length of every tour $T \in S$ must be at most $c|T^*|$, where $|T^*|$ is the length of an optimal TSP tour; and second, the maximum pairwise Jaccard similarity among tours in the set $S$ must be minimized. This formulation seeks high-quality solutions while maximizing their structural dissimilarity.

Within this growing theme, Diverse TSP also has many compelling potential applications. These include optimization tasks in logistics, where redundant and distinct path options are necessary for fault tolerance, and in robotics, such as efficient patrol planning and task allocation where various routes must be covered~\citep{yang2019patrol,liu2019task}. Despite this, the development of a guaranteed approximation algorithm for this problem is still open. The most closely related work are from the authors in~\citep{de2025price}, which focuses on 2 diverse tours only, and the authors in~\citep{gao2022obtaining}, which addresses the bi-criteria optimization of generating $k$ diverse approximate spanning trees. However, converting this approach as TSP tours suffers two critical limitations: First, the quality of the diversity achieved for spanning trees may fall apart when the trees are converted into Hamiltonian tours (see Figure \ref{fig:counter_examples_combined} as an example). Secondly, the diverse solutions in~\citep{gao2022obtaining} relied on budget constraint programming, which faces the scalability issue for large graph. 

\textbf{Contributions.} This paper studies D-TSP through a sample-then-select
framework. We make the following contributions.

\begin{itemize}
  \item \textbf{Our first guarantee is a $2$-approximation.} With high probability, any sampler that is uniform over the feasible region and places no mass outside it returns,
after $M$ draws, the pairwise Jaccard distance is at least half of the optimal. The required $M$ is $N_c \ln(k/\delta)$, where $N_c$ is the number of feasible tours, which is exponential in the number of cities in general. 
  \item Any samplers without the sampling distribution described above pays a further factor according to our analysis. This factor is unavoidable: we show a family of Euclidean instances on which a non-uniform sampler needs proportionally more samples than the uniform one essentially.

  \item For practical use, we provide a pipeline that samples spanning trees and converts to tours. We show that this relaxation costs nothing in principle but attributable to the objective alone. This residue is measurable from a batch.

\item By connecting D-TSP to the maximum independent set problem, \textbf{our second guarantee requires a sample size linear in $k$ in expectation.} Write $p_\mu(\rho)$ for the probability that two independently drawn tours lie within
distance $\rho$ of each other. Fix $\beta > 1$ and a threshold
$0 < \rho \le D^{*}_{\beta k}(\Omega_c)$, and suppose the sampler satisfies $p_\mu(\rho) \le 1/(\beta k)$. Then $M = O(k)$ draws already contain, in expectation, a set of $k$ tours whose pairwise Jaccard distances are all at
least $\rho$. Certifying the same conclusion with high probability rather than in expectation costs nothing further in samples, but it does bound $\beta$ away from one.

  \item Our experiments
  confirm the analysis above, and show that our method beats three
  state-of-the-art methods on every instance, both on in-distribution random
  40-city instances and on out-of-distribution ones (TSPLIB instances with
  51--101 cities), when the budget matches the one it was trained for.
\end{itemize}

\begin{figure}[htbp]
    \centering
    
    % --- 第一張子圖 (Tree 1) ---
    \subcaptionbox{Counter example tree-1}{%
        \includegraphics[width=0.31\linewidth]{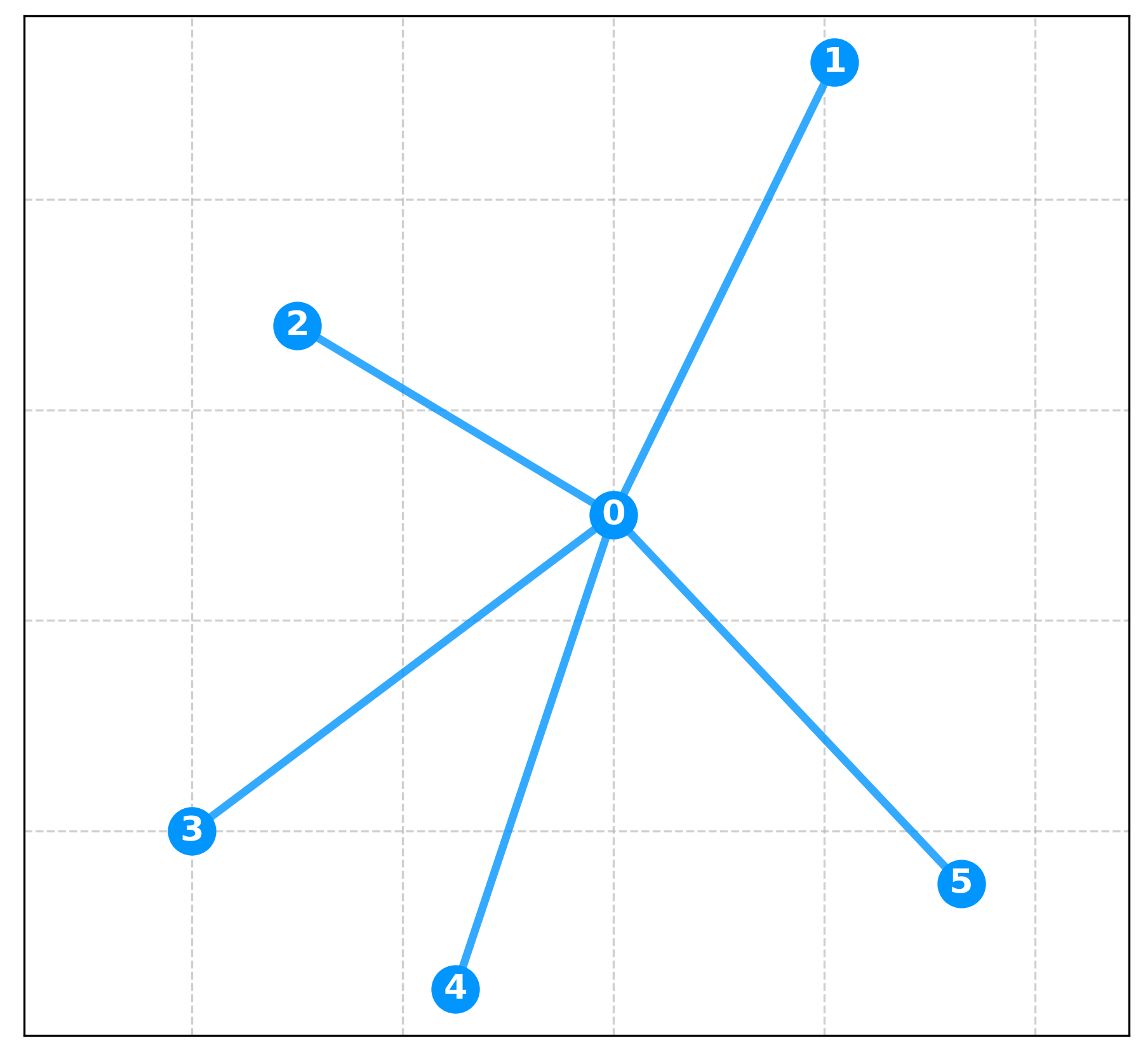}%
        \label{fig:counter_example_tree-1}%
    }
    \hfill % 撐開間距
    % --- 第二張子圖 (Tree 2) ---
    \subcaptionbox{Counter example tree-2}{%
        \includegraphics[width=0.31\linewidth]{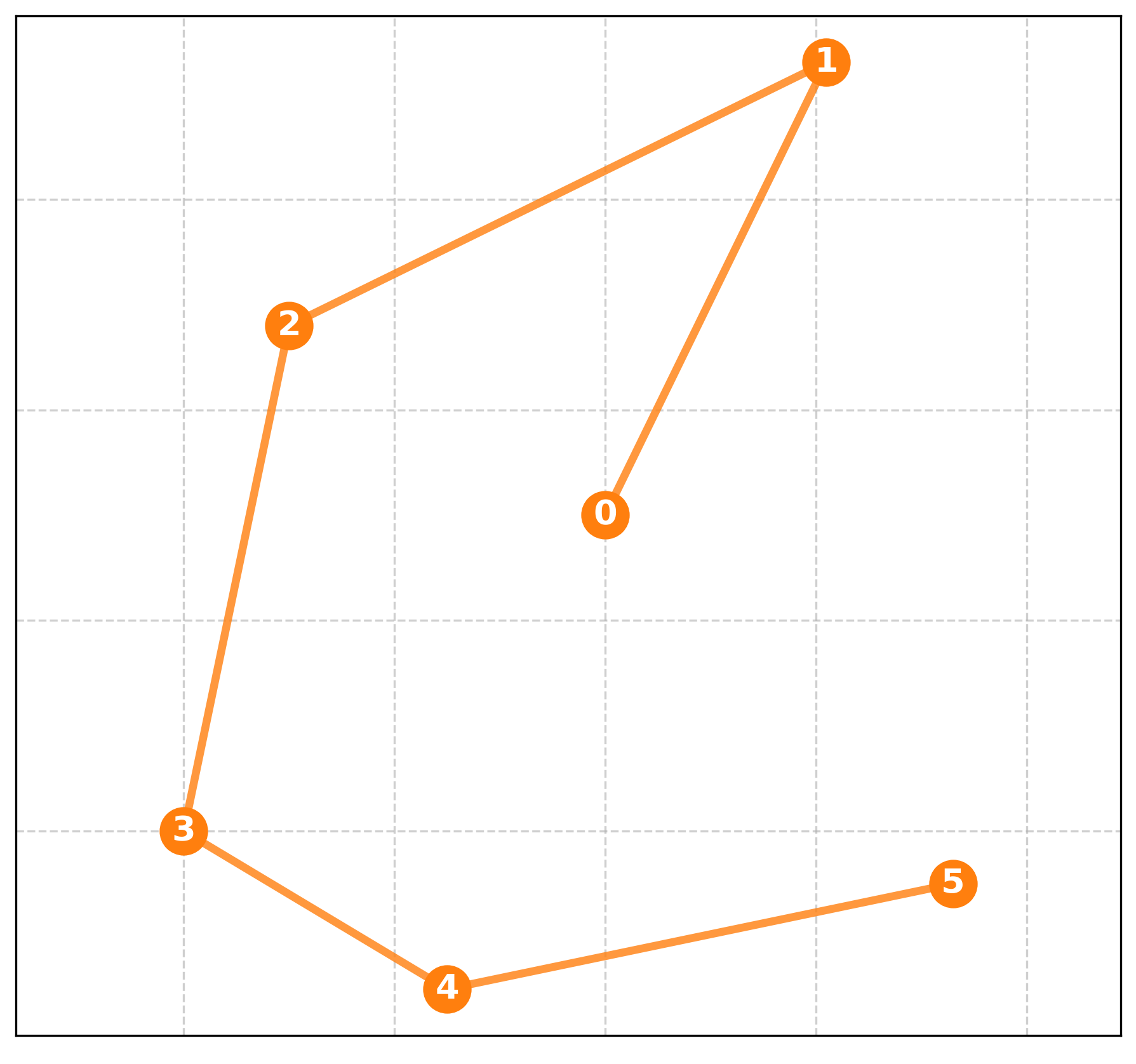}%
        \label{fig:counter_example_tree-2}%
    }
    \hfill % 撐開間距
    % --- 第三張子圖 (Tour) ---
    \subcaptionbox{Tour}{%
        \includegraphics[width=0.31\linewidth]{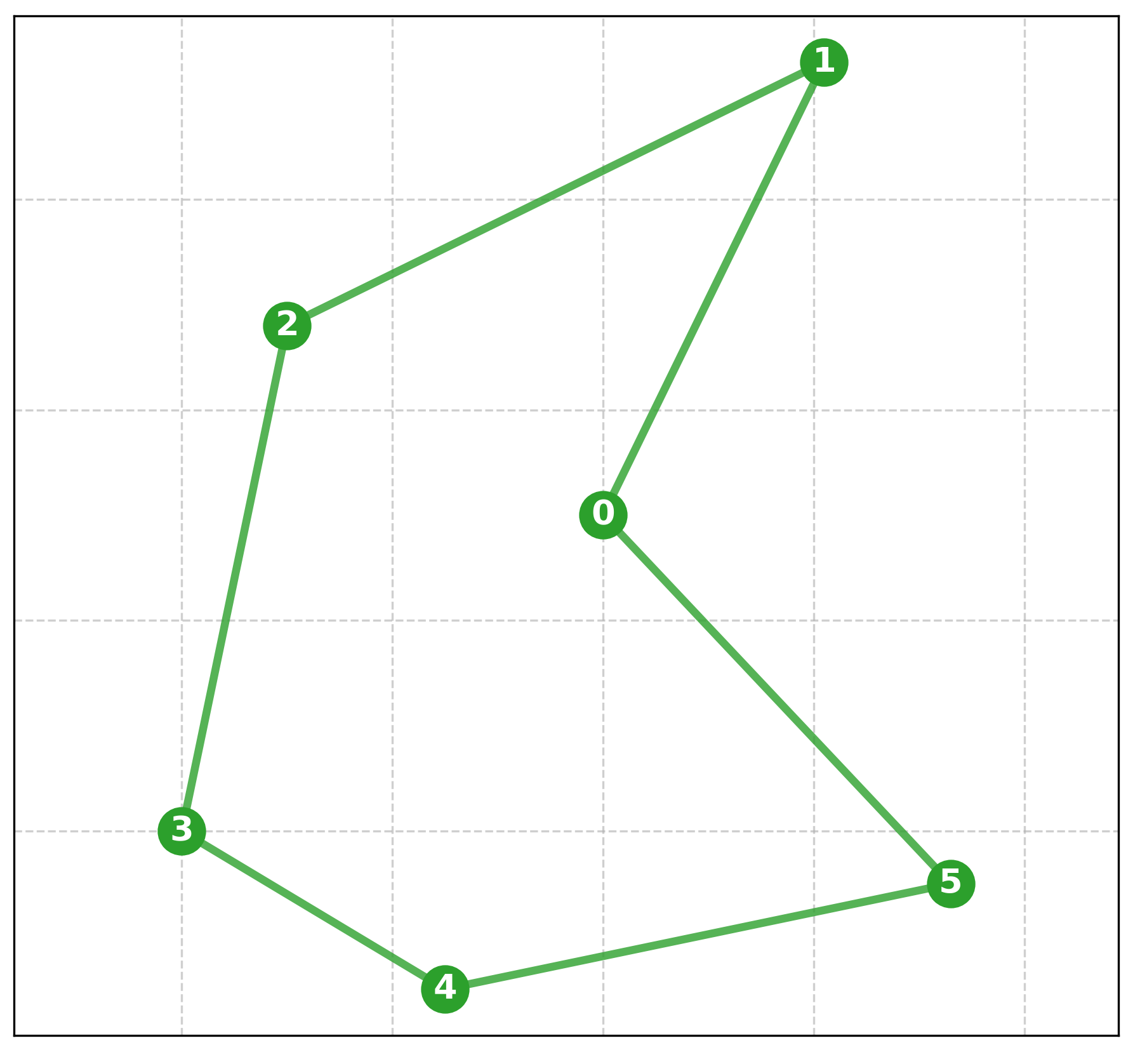}%
        \label{fig:counter_example_tour}%
    }
    
    \caption{A counterexample shows that one cannot preserve the dissimilarity when converting spanning trees into tours. (a) and (b) depict two distinct spanning trees sharing exactly one edge (resulting in a near-zero Jaccard index). Despite this minimal overlap, (c) demonstrates that distinct spanning trees can generate the exact same TSP tour via the double-tree heuristic.}
    \label{fig:counter_examples_combined}
\end{figure}

\section{Related Work}\label{sec:Related Work}
Existing research addresses this challenge in heuristic ways, which can generally be divided into two main categories: traditional heuristics for Multi-Solution TSP (MSTSP) and neural approaches for Diverse TSP.

\textbf{Traditional Heuristics for MSTSP.} The Multi-Solution TSP has primarily been tackled by traditional heuristics~\citep{angus2006niching, do2022niching, han2018multimodal, huang2019niching, huang2018seeking, ronald1995finding}. Although advanced genetic algorithms have shown exceptional performance in finding a single optimal tour for the standard TSP~\citep{nagata2013powerful}, adapting them to efficiently explore multiple diverse solutions remains computationally challenging. Most of these methods are based on genetic or evolution-based algorithms, which naturally incorporate diverse "genes" or solution components during the evolution phase. The main difference among these lines of research is their focus on how to effectively explore and maintain the search space while simultaneously preserving the diversity and quality of the generated solutions. A significant drawback in this family of work is the high computation time required, as the problems scale up due to their NP-hard and multi-solution nature. Notably, the Niching Memetic Algorithm (NMA)~\citep{huang2019niching} has been identified as a state-of-the-art traditional heuristic for this problem~\citep{liu2023evolutionary}. NMA utilizes niche techniques that initially divide the generated TSP tours into distinct groups, maintaining tours only if they are similar to others within their assigned niche, thus protecting different regions of the solution space. Following NMA, approaches based on Evolutionary Diversity Optimization (EDO) have also been proposed~\citep{do2022niching}. EDO frameworks typically integrate diversity-specific metrics directly into the fitness function or use specialized evolution operators to push the population toward distinct areas of the solution space. Unlike NMA, EDO methods initialize from a known optimal solution and evolve the population solely to maximize diversity, which can be impractical given the NP-hard property of finding the optimal TSP tour.

\textbf{Neural Approaches for Diverse TSP.} In contrast to traditional heuristics, neural approaches for combinatorial optimization (CO) learn the solutions in a data-driven way. While these algorithms require a training stage, they offer the benefit of high-speed inference thanks to the parallel structure afforded by GPUs. The key challenge is developing policies that can effectively transfer to other graphs with different sizes and topologies.

Most initial neural approaches focused on CO with single-criteria optimization. Early foundational evaluations explored continuous neural architectures like Hopfield networks for the standard TSP~\citep{potvin1993state}. Following the rise of deep learning, modern approaches shifted to sequence-to-sequence learning, such as the foundational work of the Pointer Network~\citep{vinyals2015pointer, yin2020enhancing}, Graph Pointer Networks~\citep{ma2019combinatorial, yang2022graph}, and the Attention Model~\citep{vaswani2017attention, kool2019attention}. More recent work has shifted focus to multi-solution generation. By utilizing multiple parallel decoders, the model can explore different regions of the solution space~\citep{xin2021multi}. A key example is the work from the authors in~\citep{li2025diversity}, which proposes a deep reinforcement learning solver specifically for Diversity Optimization for TSP. This model is equipped with several mechanisms such as a relativization filter, temperature softmax~\citep{yi2019sampling}, and active search~\citep{hottung2022efficient}. These additional mechanisms, while effective, make the underlying model architecture complicated and tend to decrease overall interpretability. In addition, the active search method, while increasing the quality of the solutions, slowing down the inference stage significantly for large number of cities. 

% =====================================================================
% Diverse TSP -- Section 3 (Problem Definition), revised
%
% Notation conventions used throughout the paper:
%   - d(.,.) is the Jaccard distance on edge sets and is the ONLY
%     distance symbol. The Hamming formulation is expressed through the
%     overlap a(.,.) instead of a second metric symbol.
%   - The objective MAXIMIZES the minimum pairwise distance, so the
%     2-approximation of Section 4 reads correctly.
%   - D_k^*(S) is written with the ground set as an argument, so that
%     Sections 4 and 6 can compare D_k^*(P_M) with D_k^*(Omega_c).
%   - C(.) is tour cost, so generic sets are named A, B, F.
%   - \mathbf{1} is the indicator (\mathbb{1} is not defined by amssymb
%     and will not compile).
%   - Table 1 lists symbols in the order: problem input, problem output,
%     optimum and near-optimum, tools used in the analysis.
% =====================================================================

\section{Problem Definition}
\label{sec:problem}

We formulate the problem of generating diverse TSP solutions within the
framework of the Multi-Solution Traveling Salesman Problem
\citep{li2025diversity}. The goal is to find a set of tours that are at once
near-optimal and structurally different from one another. Table~\ref{tab:notation}
collects the symbols used throughout the paper.

\begin{table}[t]
\centering
\caption{Notation used throughout the paper.}
\label{tab:notation}
\begin{tabular}{ll}
\hline
Symbol & Meaning \\
\hline
\multicolumn{2}{l}{\emph{Problem input}} \\
$n$ & number of cities \\
$\mathcal{T}_n$ & set of all tours, $|\mathcal{T}_n| = (n-1)!/2$ \\
$\pi$, $E_\pi$ & a tour and its edge set, $|E_\pi| = n$ \\
$C(\pi)$ & cost of a tour \\
$c \ge 1$ & cost tolerance \\
$k$ & number of tours requested \\
\multicolumn{2}{l}{\emph{Problem output}} \\
$\Pi$ & the returned set of $k$ tours \\
$a(\pi,\pi')$ & number of shared edges, an integer in $\{0,\dots,n\}$ \\
$d(\pi,\pi')$ & Jaccard distance, equal to $2(n-a)/(2n-a)$ \\
$D(\Pi)$ & smallest pairwise distance inside $\Pi$ \\
\multicolumn{2}{l}{\emph{Optimum and near-optimum}} \\
$C^{*}$ & optimal tour cost \\
$\widehat{C}$ & computable upper bound on $C^{*}$, used in its place \\
$\Omega_c$, $N_c$ & feasible region and its cardinality \\
$D^{*}_k(S)$ & largest $D(\Pi)$ over all $k$-subsets $\Pi$ of $S$ \\
\multicolumn{2}{l}{\emph{Tools used in the analysis}} \\
$\mu$ & sampling distribution on $\Omega_c$ \\
$M$, $\mathcal{P}_M$ & number of retained samples, and the pool they form \\
$\alpha$ & entropy coefficient (Section~\ref{sec:method}) \\
$\rho$ & separation threshold \\
$G_\rho(S)$ & conflict graph on $S$ at threshold $\rho$ \\
$\mathrm{ind}(\cdot)$ & size of a largest independent set \\
$p_\mu(\rho)$ & collision rate, $\Pr[d(X,Y) < \rho]$ \\
\hline
\end{tabular}
\end{table}

\subsection{Tours and cost}

Let $G = (V, E, w)$ be a complete weighted graph, where
$V = \{v_1,\dots,v_n\}$, and $w : E \to \mathbb{R}^{+}$ gives the Euclidean
distance between two vertices. A solution to the TSP is a tour $\pi$, that is,
a Hamiltonian cycle, which we identify with its edge set
$E_\pi \subseteq E$. We write $\mathcal{T}_n$ for the set of all tours and
record the elementary but consequential fact that
\begin{equation}
  |E_\pi| = n \qquad \text{for every tour } \pi ,
  \label{eq:equal-cardinality}
\end{equation}
so that the solution space is a family of \emph{equal-cardinality} edge sets.
Section~\ref{sec:method} uses this property to sharpen our guarantee and to
connect our objective to the Hamming-distance formulations common in the
diverse-solutions literature.

The cost of a tour is $C(\pi) = \sum_{e \in E_\pi} w(e)$, and $C^{*}$ denotes
the optimal cost, attained by some tour $\pi^{*}$.

\subsection{Feasible region and distance}

\begin{definition}[Feasible region]
\label{def:feasible}
Given a tolerance $c \ge 1$, the feasible region is
\begin{equation}
  \Omega_c \;=\; \{\, \pi \in \mathcal{T}_n \;:\; C(\pi) \le c\,C^{*} \,\},
  \qquad N_c = |\Omega_c| .
  \label{eq:feasible}
\end{equation}
\end{definition}

All tours in $\Omega_c$ are treated as equally acceptable: the cost constraint
is a \emph{hard filter}, not a secondary objective. This is what makes the
sampling question in Section~\ref{sec:method} well posed, since it says what a
sampler should aim at inside $\Omega_c$, namely nothing in particular.

\begin{remark}[Using $\widehat{C}$ in place of $C^{*}$]
\label{rem:chat}
Deciding whether a given tour lies in $\Omega_c$ already requires $C^{*}$, so
$\Omega_c$ is not computable. We therefore replace $C^{*}$ by an upper bound
$\widehat{C} \ge C^{*}$ returned by a strong heuristic solver, which enlarges
the feasible region to a superset of $\Omega_c$. Every statement in this paper
holds verbatim for the region defined by $\widehat{C}$, because none of them
uses optimality of $C^{*}$; only the meaning of the tolerance changes, from $c$
to the effective value $c\,\widehat{C}/C^{*}$. We write $C^{*}$ below for
readability and return to the substitution where it matters.
\end{remark}

To measure how similar two tours are we use the Jaccard index of their edge
sets and the distance it induces,
\begin{equation}
  J(\pi,\pi') \;=\; \frac{|E_{\pi} \cap E_{\pi'}|}{|E_{\pi} \cup E_{\pi'}|},
  \qquad
  d(\pi,\pi') \;=\; 1 - J(\pi,\pi') .
  \label{eq:jaccard}
\end{equation}
Write $a = a(\pi,\pi') = |E_{\pi} \cap E_{\pi'}|$ for the overlap. By
Equation~\ref{eq:equal-cardinality} we have $|E_{\pi} \cup E_{\pi'}| = 2n - a$,
so both quantities are functions of the overlap alone:
\begin{equation}
  J(\pi,\pi') = \frac{a}{2n-a},
  \qquad
  d(\pi,\pi') = \frac{2(n-a)}{2n-a} ,
  \label{eq:J-in-a}
\end{equation}
with $J$ strictly increasing and $d$ strictly decreasing in $a$. We use $d$ as
the only distance in this paper and express everything else through $a$.

The selection guarantee of Section~\ref{sec:method} needs $d$ to be a metric.
This is known but not obvious \citep{levandowsky1971distance, kosub2019note},
and we give the short proof because closely related measures fail it.\footnote{%
The Dice--S\o{}rensen dissimilarity $1 - 2a/(|E_\pi| + |E_{\pi'}|)$
\citep{dice1945measures} and the cosine dissimilarity are equally natural as
overlap measures and appear in the diversity literature, but neither satisfies
the triangle inequality, so Theorem~\ref{thm:greedy} would not apply to them.}

\begin{proposition}[$d$ is a metric]
\label{prop:metric}
On any family of non-empty finite sets, $d$ is a metric. In particular $d$ is a
metric on $\mathcal{T}_n$.
\end{proposition}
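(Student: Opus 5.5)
The plan is to check non-negativity, identity of indiscernibles, and symmetry directly, and then spend the effort on the triangle inequality. Since $|A\cap B|\le|A\cup B|$, we have $0\le d\le 1$, and $d$ is symmetric by definition. For non-empty $A,B$, the union is non-empty, so $d$ is well defined. Moreover $d(A,B)=0$ holds iff $|A\cap B|=|A\cup B|$, which happens iff $A=B$. For tours we apply this to the edge sets $E_\pi$, which determine $\pi$. Hence the second sentence of Proposition~\ref{prop:metric} follows from the first.

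For the triangle inequality I will use the symmetric-difference form
\begin{equation*}
  d(A,B)=\frac{|A\triangle B|}{|A\cup B|}.
\end{equation*}
Fix finite sets $A,B,C$; the goal is $d(A,C)\le d(A,B)+d(B,C)$.

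I will work inside the Venn diagram of $A,B,C$, with seven region sizes $x_1,\dots,x_7\ge 0$: $x_A,x_B,x_C$ for the parts lying in exactly one set, $x_{AB},x_{BC},x_{AC}$ for the parts lying in exactly two, and $x_{ABC}$ for the triple intersection. Every quantity is linear in these variables:
\begin{align*}
  |A\triangle C| &= x_A+x_C+x_{AB}+x_{BC},\\
  |A\cup C| &= |A\cup B\cup C|-x_B,
\end{align*}
and similarly for the other two pairs. I will then reduce the problem in two steps.

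\textbf{Step 1 (pass to the full union).} Let $U=|A\cup B\cup C|$. The map $t\mapsto s/(t+u)$ is nonincreasing in $u\ge0$, so enlarging each denominator to $U$ decreases the right-hand side. The left-hand side is harder to handle that way. So I will first prove the key inequality
\begin{equation*}
  \frac{|A\triangle C|}{|A\cup C|}\le\frac{|A\triangle B|+|B\triangle C|}{U}
\end{equation*}
in the case $x_B=0$, where $|A\cup C|=U$. In that case it reduces to
\begin{equation*}
  |A\triangle C|\le|A\triangle B|+|B\triangle C|,
\end{equation*}
which is the triangle inequality for symmetric difference, since $A\triangle C=(A\triangle B)\triangle(B\triangle C)$.

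\textbf{Step 2 (remove the elements only in $B$).} Adding elements only to $B$ leaves $d(A,C)$ unchanged. I will show it cannot decrease $d(A,B)+d(B,C)$ below $d(A,C)$. Each term $d(A,B)=\frac{s}{t}$ becomes $\frac{s+x_B}{t+x_B}$, which moves toward $1$. If either term is already $\ge$ the target, the claim is immediate, because $d(A,C)\le1$. Otherwise both terms increase. To make this clean, I will prove the general step directly: with $r=d(A,B)$, $r'=d(B,C)$ computed at $x_B=0$, the sum after adding $x_B$ is
\begin{equation*}
  \frac{s+x}{t+x}+\frac{s'+x}{t'+x}.
\end{equation*}
This is $\ge\min(1,r+r')$, since each fraction is $\ge\min(1,\cdot)$ of its value at $x=0$. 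Combining this with Step 1 and $d(A,C)\le1$ finishes the argument.

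The main obstacle is exactly this bookkeeping with different denominators in Step 2. If the monotonicity argument becomes messy, my fallback is the standard Steinhaus-transform argument: $d$ is obtained from the metric $|X\triangle Y|$ by the transform
\begin{equation*}
  \frac{2\,\delta(X,Y)}{\delta(X,\emptyset)+\delta(Y,\emptyset)+\delta(X,Y)},
\end{equation*}
taken with base point $\emptyset$, which up to normalization equals $d$. The Steinhaus transform is known to preserve the triangle inequality, which proves the claim \citep{kosub2019note}.
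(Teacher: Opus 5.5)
Your proof is correct, and it takes a genuinely different route from the paper's.

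The paper uses a MinHash coupling. For a uniformly random ordering $\sigma$ of the ground set, let $h_\sigma(A)$ be the $\sigma$-first element of $A$; then $\Pr_\sigma[h_\sigma(A)\neq h_\sigma(B)] = d(A,B)$. The triangle inequality is then the expectation of the pointwise inequality $\mathbf{1}[h_\sigma(A)\neq h_\sigma(F)] \le \mathbf{1}[h_\sigma(A)\neq h_\sigma(B)] + \mathbf{1}[h_\sigma(B)\neq h_\sigma(F)]$.

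You argue deterministically on the Venn diagram instead. When $B\subseteq A\cup C$ one has $|A\cup C| = |A\cup B\cup C|$, so $d(A,C)$ equals $|A\triangle C|/|A\cup B\cup C|$. Meanwhile $d(A,B)$ and $d(B,C)$ only shrink when their denominators are enlarged to $|A\cup B\cup C|$, which reduces the claim to the triangle inequality for $|\cdot\triangle\cdot|$. Elements of $B$ outside $A\cup C$ leave $d(A,C)$ unchanged while only increasing the other two terms.

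Two small remarks on your write-up:
\begin{itemize}
\item Step~2 is simpler than you make it. Since $s\le t$, we have $\frac{s+x}{t+x}-\frac{s}{t} = \frac{x(t-s)}{t(t+x)}\ge 0$, so each term is monotone in $x$ and the $\min(1,\cdot)$ detour can be dropped.
\item The reduced set $B\cap(A\cup C)$ may be empty and need not belong to the family; for instance, it need not be a tour. This is harmless, because Step~1 uses only cardinalities, and $A,C\neq\emptyset$ keeps every denominator positive.
\end{itemize}

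As for what each approach buys: the coupling is shorter and shows more. It exhibits $d$ as an average of $\{0,1\}$-valued pseudometrics, i.e.\ a disagreement probability, which is the structure behind locality-sensitive hashing. Your argument needs no randomization, and it isolates exactly the configuration that defeats the naive common-denominator bound, namely elements of $B$ lying outside $A\cup C$.

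Your Steinhaus fallback is also valid. In fact the transform with base point $\emptyset$ equals $d$ exactly, with no normalization, because $|X|+|Y|+|X\triangle Y| = 2|X\cup Y|$.
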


\begin{proof}
Non-negativity and symmetry are immediate, and $d(A,B) = 0$ forces
$|A \cap B| = |A \cup B|$, hence $A = B$. For the triangle inequality, let
$\sigma$ be a uniformly random permutation of the ground set and let
$h_\sigma(A) = \arg\min_{e \in A} \sigma(e)$. The smallest-rank element of
$A \cup B$ is uniform on $A \cup B$, and $h_\sigma(A) = h_\sigma(B)$ holds
exactly when that element lies in $A \cap B$, so
$\Pr_\sigma[h_\sigma(A) \neq h_\sigma(B)] = d(A,B)$. For any fixed $\sigma$,
equality is transitive, which gives the pointwise inequality
\[
  \mathbf{1}[h_\sigma(A) \neq h_\sigma(F)] \;\le\;
  \mathbf{1}[h_\sigma(A) \neq h_\sigma(B)] +
  \mathbf{1}[h_\sigma(B) \neq h_\sigma(F)] .
\]
Taking expectations over $\sigma$ yields $d(A,F) \le d(A,B) + d(B,F)$. Every
tour has $|E_\pi| = n \ge 3$, so $h_\sigma$ is well defined on tours.
\end{proof}

\begin{remark}[The distance is coarse]
\label{rem:resolution}
Two distinct tours cannot share $n-1$ edges, since the last edge is then forced
and the tours coincide; hence $a \le n-2$ and, by Equation~\ref{eq:J-in-a},
\[
  d(\pi,\pi') \;\ge\; \frac{4}{n+2}
  \qquad \text{for } \pi \neq \pi' .
\]
The bound is attained exactly by pairs that differ in two edges, that is, by a
single $2$-opt move. So $d$ takes only $n$ distinct values, and $4/(n+2)$ is
the finest resolution available on $\mathcal{T}_n$. Section~\ref{sec:method}
works below this resolution, where covering a tour means hitting it exactly.
\end{remark}

\subsection{The Diverse TSP}

\begin{definition}[D-TSP]
\label{def:dtsp}
For a set $\Pi$ of at least two tours write
$D(\Pi) = \min_{\pi \neq \pi' \in \Pi} d(\pi,\pi')$, and for a finite set of
tours $S$ with $|S| \ge k$ write
\begin{equation}
  D^{*}_k(S) \;=\; \max_{\Pi \subseteq S,\, |\Pi| = k} D(\Pi) .
  \label{eq:objective}
\end{equation}
Given $G$, a tolerance $c \ge 1$ and an integer $k \ge 2$, the Diverse
Traveling Salesman Problem asks for a set of $k$ tours in $\Omega_c$ attaining
$D^{*}_k(\Omega_c)$.
\end{definition}

Carrying the ground set as an argument is deliberate. Our method optimizes over
a sampled pool rather than over $\Omega_c$, and Sections~\ref{sec:method}
and~\ref{sec:conflict} both turn on the difference between $D^{*}_k(\mathcal{P}_M)$
and $D^{*}_k(\Omega_c)$.

For a selected set $\Pi$ we also write $J(\Pi) = 1 - D(\Pi)$ for its largest
pairwise Jaccard index. Since $d = 1 - J$, Definition~\ref{def:dtsp} is
equivalent to minimizing $J(\Pi)$, the form used in prior work
\citep{gao2022obtaining}, and we report $J(\Pi)$ in all experiments for
comparability. We state the problem in the distance form because the guarantee
of Section~\ref{sec:method} is multiplicative in $d$ and only additive in $J$;
stating it on the similarity scale would misrepresent its strength.

\begin{remark}[Connection to Hamming diversity]
\label{rem:hamming}
The normalized symmetric difference $|E_\pi \triangle E_{\pi'}|/(2n) = (n-a)/n$
is, like $d$, strictly decreasing in the overlap $a$. Any objective that
compares pairs of tours by distance therefore gives the same answer under
either quantity, so Definition~\ref{def:dtsp} is \emph{equivalent} to the
max--min Hamming formulation studied for diverse spanning trees, matchings and
matroid bases \citep{hanaka2021finding, hanaka2023framework, danna2009select}.
Section~\ref{sec:method} uses this equivalence once, to tighten what the
guarantee says on the Jaccard scale at no algorithmic cost.
\end{remark}

% =====================================================================
% Diverse TSP -- Section 4 (Sampling and Selection), revised
%
% Notation (see Table 1 in Section 3):
%   - d is the only distance symbol; overlaps are written through a.
%   - Delta(mu) is the only measure of non-uniformity. Lambda and the
%     covering number N_eps have been removed.
%   - There is no cover radius: by Remark on resolution, a ball of
%     radius below 4/(n+2) is a single point, so "covering" an optimal
%     tour means hitting it exactly, and q is a hitting probability.
%   - eta is the cost gap of the lower-bound instance and is used
%     nowhere else. The ring radius of that instance is rho, since R is
%     the reward.
%   - The construction and the two longer proofs are in the Appendix.
% =====================================================================

\section{Sampling and Selection}
\label{sec:method}

D-TSP (Definition~\ref{def:dtsp}) is challenging along two independent axes.
Certifying that a single tour lies in $\Omega_c$ already requires $C^{*}$, so
feasibility checking alone contains the TSP. Even when the feasible region is
given explicitly as a finite set, selecting the $k$ mutually most distant
elements is the max--min dispersion problem, which is NP-hard
\citep{ravi1994heuristic}. The feasible region is moreover of exponential size,
admits no compact description, and the objective couples all $\binom{k}{2}$
pairs, so direct optimization over $\Omega_c$ is impractical.

We therefore decompose the problem into two stages. The first stage draws tours
from a distribution and keeps those that pass the cost filter, forming a pool
$\mathcal{P}_M \subseteq \Omega_c$ of $M$ retained tours; throughout, $M$ counts
retained tours rather than raw draws. The second stage extracts $k$ tours from
the pool greedily: starting from $S_1 = \{\pi_1\}$ with $\pi_1$ arbitrary, for
$i = 2,\dots,k$,
\begin{equation}
  \pi_i \;=\; \arg\max_{\pi \in \mathcal{P}_M \setminus S_{i-1}}\;
              \min_{\pi_j \in S_{i-1}} d(\pi, \pi_j),
  \qquad S_i = S_{i-1} \cup \{\pi_i\},
  \label{eq:greedy}
\end{equation}
and returns $\Pi = S_k$. The second stage is governed by a classical guarantee,
which applies here because $d$ is a metric (Proposition~\ref{prop:metric}).

\begin{theorem}[\citealt{ravi1994heuristic}]
\label{thm:greedy}
Let $(X, d')$ be a finite metric space with $|X| \ge k$. The set returned by
Equation~\ref{eq:greedy} run on $X$ under $d'$ satisfies
$D(\Pi) \ge D^{*}_k(X)/2$. Moreover, no polynomial-time algorithm achieves any
factor better than $2$ for max--min dispersion in general metric spaces unless
P $=$ NP.
\end{theorem}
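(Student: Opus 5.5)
The plan has two parts: the approximation guarantee via a pigeonhole argument on the optimal solution, and the hardness via a reduction from maximum clique (equivalently, independent set).

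\textbf{Greedy guarantee.} Fix an optimal $k$-subset $O = \{o_1,\dots,o_k\} \subseteq X$ with $D(O) = D^{*} := D^{*}_k(X)$. For each $o_\ell$, let $B_\ell$ be the open ball of radius $D^{*}/2$ around $o_\ell$. By the triangle inequality these balls are pairwise disjoint. If some $x$ lay in $B_\ell \cap B_m$, then $d'(o_\ell,o_m) < D^{*}$, contradicting optimality of $O$.

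I then prove by induction the invariant that every prefix $S_i$ produced by Equation~\ref{eq:greedy} satisfies $D(S_i) \ge D^{*}/2$. The base case $S_1$ holds vacuously. For the step, suppose $i \le k-1$. The set $S_i$ has $i < k$ elements and the balls are disjoint, so by pigeonhole some ball $B_\ell$ contains no element of $S_i$. Hence $\min_{\pi_j \in S_i} d'(o_\ell,\pi_j) \ge D^{*}/2$. This $o_\ell$ is not in $S_i$ and is therefore a candidate. The greedy choice $\pi_{i+1}$ maximizes the min-distance to $S_i$, so its distance to every element of $S_i$ is at least $D^{*}/2$. Combined with the hypothesis on $S_i$, this gives $D(S_{i+1}) \ge D^{*}/2$. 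Taking $i = k$ yields $D(\Pi) \ge D^{*}_k(X)/2$. Note that the arbitrary choice of $\pi_1$ causes no trouble, since the pigeonhole step holds for any $S_i$ of size below $k$.

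\textbf{Hardness.} I reduce from $k$-Clique. Given a graph $H$, define a metric on $V(H)$ with $d'(u,v) = 2$ if $uv \in E(H)$ and $d'(u,v) = 1$ otherwise. Any distances taking values in $\{1,2\}$ satisfy the triangle inequality, since $1+1 \ge 2$, so this is a metric. Then $D^{*}_k = 2$ if $H$ has a $k$-clique, and $D^{*}_k = 1$ otherwise. An algorithm with ratio strictly better than $2$ must return a set with $D(\Pi) > 1$ in the first case, and since distances lie in $\{1,2\}$ this means $D(\Pi) = 2$. It would therefore decide $k$-Clique, which is NP-complete.

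\textbf{Main obstacle.} Both halves are essentially routine. The point needing care is the hardness reduction. Here $k$ is part of the input, so $k$-Clique must be the NP-complete version, not one with fixed $k$. One must also confirm that ``factor better than $2$'' means some ratio $r<2$, possibly depending on the input, which the $\{1,2\}$ gap handles uniformly. In the greedy half, the step to state carefully is the disjointness of the balls: they must be taken open, or with strict radius $D^{*}/2$, so that the pigeonhole argument goes through when distances equal exactly $D^{*}/2$.
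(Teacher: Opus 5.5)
Your proof is correct and is the standard argument behind the cited result. The paper gives no proof of its own and defers to \citet{ravi1994heuristic}, whose factor-$2$ bound rests on the same pigeonhole over half-radius balls (there phrased as each greedy point's ball of radius $D^{*}_k(X)/2$ holding at most one optimal point), and whose hardness is the same gap reduction from Clique with distances in $\{1,2\}$. Your remark that the arbitrary choice of $\pi_1$ is harmless is worth keeping: the heuristic in that reference starts from a farthest pair while Equation~\ref{eq:greedy} starts anywhere, and your induction covers the latter directly.
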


The hardness statement is for general metric spaces. Tours under $d$ form a very
particular family, so it does not follow that $2$ is optimal here, and whether
this family admits a better constant is open. What matters for us is the split
it induces. Assuming the pool holds at least $k$ distinct tours,
\begin{equation}
  \frac{D(\Pi)}{D^{*}_k(\Omega_c)}
  \;=\;
  \underbrace{\frac{D(\Pi)}{D^{*}_k(\mathcal{P}_M)}}_{\ge\, 1/2\ \text{by Theorem~\ref{thm:greedy}}}
  \;\times\;
  \underbrace{\frac{D^{*}_k(\mathcal{P}_M)}{D^{*}_k(\Omega_c)}}_{\text{no guarantee}} .
  \label{eq:decomposition}
\end{equation}
The first factor is fixed by a known algorithm with a known analysis. The second
is at most $1$, carries no guarantee at all, and depends only on which tours the
sampler put in the pool. The rest of this section is therefore about discussing and finding the optimal
sampler which can close this gap.

\subsection{Hitting an optimal set}
\label{subsec:hit}

Since the cost constraint of Equation~\ref{eq:feasible} is a hard filter, tour
length carries no information inside $\Omega_c$, and the sampler is free to
place its mass anywhere in the feasible region. We now ask where it should place
it. Fix an optimal set $\Pi^{*} = \{o_1,\dots,o_k\} \subseteq \Omega_c$
attaining $D^{*}_k(\Omega_c)$, and let $\mu$ be the sampler conditioned on
$\Omega_c$. We first consider what would be the chance to hit all the tours in an optimal set.

\begin{definition}[Hitting probability]
\label{def:hit}
$q(\mu;\Pi^{*}) \;=\; \min_{1 \le i \le k} \mu(o_i)$.
\end{definition}

\begin{lemma}[Sample size]
\label{lem:sample}
Draw $M$ tours i.i.d.\ from $\mu$. If $M \ge \ln(k/\delta)/q$ then with
probability at least $1-\delta$ the pool contains $\Pi^{*}$, so
$D^{*}_k(\mathcal{P}_M) = D^{*}_k(\Omega_c)$ and
$D(\Pi) \ge D^{*}_k(\Omega_c)/2$. Conversely, containing the lightest $o_i$ with
probability at least $1-\delta$ requires
$M \ge \ln(1/\delta) / \ln\bigl(1/(1-q)\bigr)$, which is
$(1-o(1))\ln(1/\delta)/q$ as $q \to 0$.
\end{lemma}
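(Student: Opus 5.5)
The plan is a union bound over the $k$ members of $\Pi^{*}$ for the forward direction, and an exact computation for a single member for the converse.

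\textbf{Upper bound.} Fix $i$. Each draw is independent with law $\mu$, so the probability that $o_i$ never appears among the $M$ draws is $(1-\mu(o_i))^M \le (1-q)^M$, using $\mu(o_i) \ge q$ from Definition~\ref{def:hit}. Applying $1-x \le e^{-x}$ gives $(1-q)^M \le e^{-qM}$. A union bound over $i=1,\dots,k$ then bounds the probability that some $o_i$ is missing by $k e^{-qM}$. This is at most $\delta$ exactly when $M \ge \ln(k/\delta)/q$.

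On the complementary event, $\Pi^{*} \subseteq \mathcal{P}_M$. Since $\mathcal{P}_M \subseteq \Omega_c$, monotonicity of $D^{*}_k$ in the ground set gives $D^{*}_k(\mathcal{P}_M) \le D^{*}_k(\Omega_c)$. The reverse inequality holds because $\Pi^{*}$ is a feasible $k$-subset of the pool. Hence $D^{*}_k(\mathcal{P}_M) = D^{*}_k(\Omega_c)$. The pool contains at least $k$ distinct tours, namely those of $\Pi^{*}$, so Theorem~\ref{thm:greedy}, applicable since $d$ is a metric by Proposition~\ref{prop:metric}, applied to $X=\mathcal{P}_M$ (taken as a set of distinct tours) yields $D(\Pi) \ge D^{*}_k(\Omega_c)/2$ via the decomposition~\eqref{eq:decomposition}.

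\textbf{Converse.} Let $o_j$ be a lightest element, so $\mu(o_j)=q$. Then the probability that $o_j$ is contained in the pool is exactly $1-(1-q)^M$. Requiring this to be at least $1-\delta$ means $(1-q)^M \le \delta$. Taking logarithms, and noting $\ln(1-q)<0$ for $0<q<1$, this is equivalent to $M \ge \ln(1/\delta)/\ln(1/(1-q))$. For the asymptotic form, expand $\ln(1/(1-q)) = q + q^2/2 + \dots = q(1+O(q))$, so the threshold equals $(1-o(1))\ln(1/\delta)/q$ as $q\to 0$. If $q=1$ the statement is trivial, with $k=1$ effectively; I would note this degenerate case separately.

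\textbf{Main obstacle.} Nothing here is deep. The only point needing care is the bookkeeping that the pool may contain repeated tours, so $D^{*}_k$ and the greedy algorithm must be understood on the set of distinct retained tours. I would state this convention explicitly, so that Theorem~\ref{thm:greedy} applies with $|X| \ge k$ guaranteed by $\Pi^{*} \subseteq \mathcal{P}_M$.
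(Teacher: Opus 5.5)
Your proposal is correct and matches the paper's proof step for step: a union bound over the $k$ members of $\Pi^{*}$ via $(1-\mu(o_i))^M \le (1-q)^M \le e^{-qM}$; equality of the two optima because $\Pi^{*} \subseteq \mathcal{P}_M \subseteq \Omega_c$; Theorem~\ref{thm:greedy} for the factor two; and the exact miss probability $(1-q)^M$ of the lightest tour for the converse. Writing $\mu(o_i) \ge q$ rather than equality, and treating repeated draws explicitly, only tightens the paper's wording. Your $q=1$ case cannot arise, since $q \le 1/k \le 1/2$ when $k \ge 2$.
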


\begin{proof}
The probability that no draw equals $o_i$ is $(1-q)^M \le e^{-qM} \le \delta/k$,
and a union bound over the $k$ tours shows that all of them are drawn with
probability at least $1-\delta$. On that event $\mathcal{P}_M$ contains
$\Pi^{*}$, and since $\mathcal{P}_M \subseteq \Omega_c$ the two optima agree;
Theorem~\ref{thm:greedy} then gives the bound on $D(\Pi)$. For the converse,
$(1-q)^M \le \delta$ is necessary, and taking logarithms gives the stated bound.
\end{proof}

The cost of the design is therefore $\Theta(1/q)$, and nothing else about $\mu$
enters. Since $\Pi^{*}$ is unknown, we ask which sampler is best when it is
chosen adversarially, and measure a sampler by
\begin{equation}
  \Delta(\mu) \;=\; \bigl( N_c \min_{\pi \in \Omega_c} \mu(\pi) \bigr)^{-1} .
  \label{eq:delta}
\end{equation}

\begin{proposition}[The uniform sampler is optimal]
\label{prop:minimax}
$\sup_{\Pi^{*}} q(\mu;\Pi^{*})^{-1} = \Delta(\mu)\, N_c$, where the supremum is
over all $k$-subsets of $\Omega_c$. Moreover $\Delta(\mu) \ge 1$, with equality
if and only if $\mu$ is uniform on $\Omega_c$.
\end{proposition}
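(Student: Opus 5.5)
The argument has two parts, and I will treat them in order: first the identity for the supremum, then the lower bound on $\Delta(\mu)$ with its equality case.

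\textbf{The supremum.} By Definition~\ref{def:hit}, $q(\mu;\Pi^{*}) = \min_{i}\mu(o_i)$, so $q(\mu;\Pi^{*})^{-1} = \max_i \mu(o_i)^{-1}$. I will bound this from both sides.

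For the upper bound, every $o_i$ lies in $\Omega_c$, so $\mu(o_i) \ge m := \min_{\pi\in\Omega_c}\mu(\pi)$. Hence $q(\mu;\Pi^{*})^{-1} \le 1/m$ for every $k$-subset $\Pi^{*}$.

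For attainment, I pick a minimizer $\pi_0$ with $\mu(\pi_0)=m$. I then take any $k$-subset of $\Omega_c$ that contains $\pi_0$. Such a subset exists because $N_c \ge k$, which is implicit whenever the problem is posed. For this subset $q = m$, so the supremum is a maximum and equals $1/m$.

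By Equation~\ref{eq:delta}, $1/m = \Delta(\mu)N_c$, which gives the identity. One caveat: if $m=0$, both sides are $+\infty$ under the convention $1/0=\infty$. I will mention this degenerate case, since then $\Delta(\mu)=\infty$ as well.

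\textbf{The lower bound and equality case.} Since $\mu$ is a probability distribution on $\Omega_c$,
\[
1=\sum_{\pi\in\Omega_c}\mu(\pi)\ \ge\ N_c\, m .
\]
This gives $N_c m \le 1$, that is, $\Delta(\mu)\ge 1$.

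Equality $N_c m = 1$ holds exactly when every term $\mu(\pi)-m$ in the nonnegative sum $\sum_\pi(\mu(\pi)-m)=1-N_c m$ is zero. That is, $\mu(\pi)=m=1/N_c$ for all $\pi$, which means $\mu$ is uniform. Conversely, the uniform distribution gives $m=1/N_c$, so $\Delta=1$.

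\textbf{Main obstacle.} There is no real obstacle here. The only care needed is the attainment step, which requires $N_c\ge k$ so that a $k$-subset containing the lightest tour exists, together with the convention for $m=0$. I would state both explicitly.
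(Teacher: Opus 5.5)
Your proof is correct and follows the paper's argument essentially line for line: the bound $\mu(o_i)\ge\min_{\pi\in\Omega_c}\mu(\pi)$, attained by any $k$-subset containing a minimizer, followed by the averaging argument $1=\sum_{\pi}\mu(\pi)\ge N_c\min_{\pi}\mu(\pi)$, with equality exactly when all masses are equal. Your explicit remarks on the need for $N_c\ge k$ and on the degenerate case $\min_{\pi}\mu(\pi)=0$ are details the paper leaves implicit, and you handle them correctly.
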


\begin{proof}
A set $\Pi^{*}$ containing a minimizer of $\mu$ attains the supremum, and no set
can do better, so $\sup_{\Pi^{*}} q^{-1} = (\min_\pi \mu(\pi))^{-1}$. For the
second claim, $\min_\pi \mu(\pi) \le 1/N_c$ because the values average to
$1/N_c$, with equality exactly when they are all equal.
\end{proof}

Combining the two statements gives the worst-case requirement in closed form:
\begin{equation}
  \sup_{\Pi^{*}} M \;=\; \Delta(\mu)\, N_c \ln(k/\delta) .
  \label{eq:sample-size}
\end{equation}

Thus, the uniform sampler pays $N_c \ln(k/\delta)$ and every other sampler pays
$\Delta$ times as much. Now, one may ask how to obtain a sampler whose
distribution is close to uniform on $\Omega_c$. The following examples show that reach it by rejection is
hopeless. 

In an instance of $n$ cities drawn uniformly from the unit square, a uniformly
random tour has expected cost $\Theta(n)$ while the optimum is only
$\Theta(\sqrt{n})$, so the ratio between the two grows like $0.73\sqrt{n}$ and
is already about $4.6$ at $n = 40$. For any fixed tolerance the acceptance
probability therefore decays as $e^{-\Theta(n)}$, so drawing tours at random and
discarding the infeasible ones is exponentially wasteful; Appendix~\ref{app:rejection}
gives the argument. The uniform target has to be learned rather than filtered
for.

The next proposition shows that learning it requires nothing beyond a correct
reward signal and an entropy bonus. A policy $p$ over tours is trained by
maximizing
\begin{equation}
  \mathbb{E}_{\pi \sim p}[R(\pi)] + \alpha H(p),
  \label{eq:obj}
\end{equation}
where $R(\pi)$ is the reward assigned to a tour,
$H(p) = -\sum_{\pi} p(\pi) \log p(\pi)$ is the entropy of the policy, and
$\alpha > 0$ is the entropy coefficient that weights the two terms. For $R$ we
take a hinge on the cost,
\begin{equation}
  R(\pi) \;=\; -\bigl( C(\pi) - c\,C^{*} \bigr)_{+},
  \qquad (x)_{+} = \max\{x,0\},
  \label{eq:hinge}
\end{equation}
which charges a tour only for the amount by which it exceeds the budget and so
vanishes on all of $\Omega_c$.\footnote{A penalty weight in front of the hinge is
redundant with $\alpha$, since the induced distribution depends only on their
ratio; we absorb it into $\alpha$.} These two ingredients already pin the
sampler down, and they pin it down to the optimal one.

\begin{proposition}[Entropy regularization with a flat reward is optimal]
\label{prop:gibbs}
The unique maximizer of Equation~\ref{eq:obj} over all distributions on a finite
solution space is the Gibbs distribution $p_\alpha(\pi) = e^{R(\pi)/\alpha}/Z$
with $Z = \sum_{\pi'} e^{R(\pi')/\alpha}$. Writing $\mu_\alpha$ for $p_\alpha$
conditioned on $\Omega_c$ and $R_{\min} = \min_{\pi \in \Omega_c} R(\pi)$,
\begin{equation}
  \Delta(\mu_\alpha)
  \;=\; \mathbb{E}_{\pi \sim \mathrm{unif}(\Omega_c)}
        \bigl[ e^{(R(\pi) - R_{\min})/\alpha} \bigr]
  \;\ge\; e^{(\bar R - R_{\min})/\alpha},
  \qquad \bar R = \mathbb{E}_{\mathrm{unif}(\Omega_c)}[R],
  \label{eq:flatness}
\end{equation}
so $\Delta(\mu_\alpha) = 1$ if and only if $R$ is constant on $\Omega_c$. In
particular the hinge of Equation~\ref{eq:hinge} gives $\Delta(\mu_\alpha) = 1$
for every $\alpha > 0$.
\end{proposition}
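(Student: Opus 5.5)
The proof has three parts: identify the maximizer, compute the non-uniformity factor $\Delta$ of its restriction to $\Omega_c$, and lower-bound that factor.

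\emph{The maximizer.} For any distribution $p$ on the finite space, write the objective of Equation~\ref{eq:obj} in terms of $p_\alpha$. Since $R(\pi) = \alpha\log p_\alpha(\pi) + \alpha \log Z$,
\[
\mathbb{E}_p[R] + \alpha H(p) \;=\; \alpha\log Z - \alpha\,\mathrm{KL}(p\,\|\,p_\alpha).
\]
Gibbs' inequality gives $\mathrm{KL}(p\|p_\alpha) \ge 0$, with equality if and only if $p = p_\alpha$. This yields both existence and uniqueness of the maximizer, and it needs $\alpha>0$ and finiteness only so that $Z$ is a finite positive number.

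\emph{The factor $\Delta$.} Conditioning $p_\alpha$ on $\Omega_c$ gives
\[
\mu_\alpha(\pi) = \frac{e^{R(\pi)/\alpha}}{Z_c}, \qquad Z_c = \sum_{\pi'\in\Omega_c} e^{R(\pi')/\alpha}.
\]
The minimum of $\mu_\alpha$ over $\Omega_c$ is attained where $R = R_{\min}$, so
\[
\Delta(\mu_\alpha) = \frac{Z_c}{N_c\, e^{R_{\min}/\alpha}} = \frac{1}{N_c}\sum_{\pi\in\Omega_c} e^{(R(\pi)-R_{\min})/\alpha},
\]
which is exactly the uniform expectation in Equation~\ref{eq:flatness}.

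\emph{The lower bound.} Because $x\mapsto e^{x/\alpha}$ is convex, Jensen's inequality applied under $\mathrm{unif}(\Omega_c)$ gives the bound $e^{(\bar R - R_{\min})/\alpha}$.

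\emph{The equality case.} If $R$ is constant on $\Omega_c$, every exponent is zero, so $\Delta = 1$. Conversely, if $R$ is not constant, some $\pi$ has $R(\pi) > R_{\min}$. Every term in the average is at least $1$ and this one is strictly greater, so $\Delta > 1$. Alternatively, $\bar R > R_{\min}$ makes the Jensen bound itself exceed $1$. This is consistent with the equality criterion of Proposition~\ref{prop:minimax}.

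\emph{The hinge reward.} By Definition~\ref{def:feasible}, $C(\pi)\le cC^*$ on $\Omega_c$, so the hinge of Equation~\ref{eq:hinge} is identically $0$ there, and $\Delta(\mu_\alpha)=1$ for every $\alpha>0$.

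The only delicate point is the KL rewriting. One must check that the terms with $p(\pi)=0$ cause no trouble, using the convention $0\log 0=0$. No such issue arises for $p_\alpha$, since $p_\alpha>0$ everywhere. Everything else is direct computation.
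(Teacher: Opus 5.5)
Your proposal is correct and follows the paper's proof essentially step for step: the rewriting $\mathbb{E}_p[R] + \alpha H(p) = \alpha\log Z - \alpha\,\mathrm{KL}(p\,\|\,p_\alpha)$, the formula $\Delta(\mu_\alpha) = Z_c/(N_c e^{R_{\min}/\alpha})$, Jensen's inequality, and the observation that the hinge vanishes on $\Omega_c$. The one small difference is that you settle the equality case by a direct termwise argument (every summand is at least $1$ and one is strictly larger) rather than through the equality condition in Jensen's inequality; both arguments are valid.
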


\begin{proof}
For the first claim,
$\mathbb{E}_p[R] + \alpha H(p) = \alpha \log Z - \alpha\,\mathrm{KL}(p \| p_\alpha)$;
the first term does not depend on $p$ and the second is non-negative, vanishing
only at $p = p_\alpha$. Conditioning on $\Omega_c$ replaces $Z$ by
$Z_c = \sum_{\pi \in \Omega_c} e^{R(\pi)/\alpha}$, so the lightest feasible tour
has mass $e^{R_{\min}/\alpha}/Z_c$ and Equation~\ref{eq:delta} gives
$\Delta(\mu_\alpha) = Z_c/(N_c e^{R_{\min}/\alpha})$, which is the stated
expectation; the lower bound is Jensen's inequality, with equality exactly when
$R$ is constant on $\Omega_c$. The hinge is zero throughout $\Omega_c$ by
Equation~\ref{eq:feasible}, hence constant there.
\end{proof}

Optimality is therefore a property of the reward rather than of the temperature.
What is required is that the reward be \emph{flat} on the feasible region, and
$\alpha$ then governs only how sharply the mass decays outside it, as
$\exp(-(C(\pi) - c\,C^{*})/\alpha)$. A reward that is not flat pays for it
exponentially, since Equation~\ref{eq:flatness} forces $\Delta$ to diverge as
$\alpha \to 0$ whenever the reward varies across $\Omega_c$. By
Remark~\ref{rem:chat} an inexact $\widehat{C}$ moves the boundary but leaves the
reward flat inside it, so $\Delta = 1$ survives the substitution.

\subsection{A matching lower bound}
\label{sec:lower}

Equation~\ref{eq:sample-size} prices non-uniformity at $\Delta$ in the worst
case over target sets. That worst case is taken over all $k$-subsets of
$\Omega_c$, not only over those that are optimal for some instance, so it is in
principle only an upper bound on what a sampler can be made to pay. We now show
that the price is real, by exhibiting instances on which a sampler with
non-uniformity $\Delta$ needs $\Omega(\Delta)$ times more samples than the
uniform one and returns a strictly worse set if it is not given them.

\begin{figure}[t]
\centering
\begin{tikzpicture}
% ---------- (a) the ring ----------
\begin{scope}
  \draw[gray!55, dashed] (0,0) circle (1.15);
  \foreach \a in {0,60,...,300} {
    \begin{scope}[rotate=\a]
      \draw[thick] (1.15,-0.30) -- (1.15,0.30);
      \fill (1.15,-0.30) circle (1.3pt);
      \fill (1.15,0)     circle (1.3pt);
      \fill (1.15,0.30)  circle (1.3pt);
      \fill (1.44,-0.08) circle (1.3pt);
    \end{scope}
  }
  \draw[rounded corners, gray!80] (1.00,-0.46) rectangle (1.62,0.46);
  \draw[gray!80] (1.62,0.46) -- (2.05,0.85);
  \node[font=\small, anchor=west] at (2.00,0.95) {cell $j$};
  \node[font=\small] at (0,-1.75) {(a) $g$ cells on a ring};
\end{scope}
% ---------- (b) traversal A ----------
\begin{scope}[xshift=4.6cm, yshift=-0.35cm]
  \draw[gray!50, dotted] (0,0) -- (2.4,0);
  \draw[gray!55, dotted] (-0.55,-0.45) -- (0,0);
  \draw[gray!55, dotted] (2.4,0) -- (2.95,-0.45);
  \draw[gray!60, dotted] (0.95,0.85) -- (0.95,0);
  \draw[thick] (0,0) -- (0.95,0.85);
  \draw[line width=1.7pt] (0.95,0.85) -- (1.2,0);
  \draw[thick] (1.2,0) -- (2.4,0);
  \fill (0,0) circle (1.6pt); \fill (1.2,0) circle (1.6pt);
  \fill (2.4,0) circle (1.6pt); \fill (0.95,0.85) circle (1.6pt);
  \node[font=\small, below] at (0,-0.05) {$p^1_j$};
  \node[font=\small, below] at (1.2,-0.05) {$p^2_j$};
  \node[font=\small, below] at (2.4,-0.05) {$p^3_j$};
  \node[font=\small, above] at (0.95,0.88) {$z_j$};
  \node[gray, font=\scriptsize] at (0.78,0.42) {$H$};
  \node[gray, font=\scriptsize] at (1.10,0.20) {$\vartheta$};
  \node[gray, font=\scriptsize] at (0.60,-0.42) {$L$};
  \node[gray, font=\scriptsize] at (1.80,-0.42) {$L$};
  \node[font=\small] at (1.2,-1.40) {(b) cheap traversal};
\end{scope}
% ---------- (c) traversal B ----------
\begin{scope}[xshift=8.6cm, yshift=-0.35cm]
  \draw[gray!50, dotted] (0,0) -- (2.4,0);
  \draw[gray!55, dotted] (-0.55,-0.45) -- (0,0);
  \draw[gray!55, dotted] (2.4,0) -- (2.95,-0.45);
  \draw[thick] (0,0) -- (1.2,0);
  \draw[line width=1.7pt] (1.2,0) -- (0.95,0.85);
  \draw[thick] (0.95,0.85) -- (2.4,0);
  \fill (0,0) circle (1.6pt); \fill (1.2,0) circle (1.6pt);
  \fill (2.4,0) circle (1.6pt); \fill (0.95,0.85) circle (1.6pt);
  \node[font=\small, below] at (0,-0.05) {$p^1_j$};
  \node[font=\small, below] at (1.2,-0.05) {$p^2_j$};
  \node[font=\small, below] at (2.4,-0.05) {$p^3_j$};
  \node[font=\small, above] at (0.95,0.88) {$z_j$};
  \node[font=\small] at (1.2,-1.40) {(c) costs $\eta$ more};
\end{scope}
\end{tikzpicture}
\caption{The lower-bound instance. (a) $g$ identical cells sit at equal angles
on a ring, and consecutive cells are joined by a single edge. (b), (c) A tour
enters a cell at $p^1_j$ and leaves at $p^3_j$, and has exactly two ways to
cover the four cities in between. The two share only the edge
$\{z_j, p^2_j\}$, drawn heavy, and the second costs $\eta$ more than the first.
The tangential offset $\vartheta$ of $z_j$ tunes $\eta$ continuously down to
zero while every other length is held fixed.}
\label{fig:gadget}
\end{figure}

Figure~\ref{fig:gadget} shows the construction. The instance places $g$
identical cells at equal angles on a ring of radius $\rho$, joined in sequence
by one edge each. A cell contains four cities: $p^1_j$, $p^2_j$ and $p^3_j$ lie
evenly spaced along the tangent at spacing $L$, while $z_j$ sits a distance $H$
outside the ring and offset by $\vartheta$ along the tangent. A tour that enters
at $p^1_j$ and leaves at $p^3_j$ can cover the cell in exactly two ways, drawn
in panels (b) and (c). They share only the edge $\{z_j, p^2_j\}$, so they agree
on one of their three internal edges, and the second is more expensive by
\begin{equation}
  \eta \;=\; \sqrt{(L+\vartheta)^2 + H^2} - \sqrt{(L-\vartheta)^2 + H^2}
  \;=\; \Theta(\vartheta) .
  \label{eq:eta}
\end{equation}
The offset $\vartheta$ therefore tunes the cost gap continuously down to zero
while leaving every other length in the instance untouched, and it is the only
free parameter the argument needs.

Once $\vartheta$ is small enough, every tour within the budget has this shape,
so a feasible tour is described by which traversal it uses in each cell. Its
cost exceeds $C^{*}$ by $\eta$ for every cell of the expensive kind, and two
such tours overlap in $n - 2h$ edges, where $h$ counts the cells on which they
disagree. Choosing the tolerance so that $c\,C^{*} = C^{*} + \eta$ therefore
leaves $\Omega_c$ with exactly $g+1$ members: the cheap tour $\pi_0$ that takes
the cheap traversal everywhere, and the $g$ tours that deviate in a single cell.

Now take $k = g$. Deviating in different cells changes four edges, so the $g$
expensive tours are mutually at distance $2/(g+1)$, whereas each of them differs
from $\pi_0$ in two edges only and is therefore just $2/(2g+1)$ away from it.
The unique optimal set is thus the set of all $g$ expensive tours, and any other
choice of $g$ feasible tours must include $\pi_0$ and loses a factor
$(g+1)/(2g+1)$, which tends to $1/2$. This is what makes the family the right
test for Section~\ref{subsec:hit}. Asking the pool to contain a whole optimal
set is in general sufficient but not necessary, since the greedy rule needs only
\emph{some} $k$ mutually distant tours, and fixing one optimal set is a further
relaxation when the optimum is not unique; here the optimal set is unique and
must be hit exactly, so neither simplification costs anything. Nor does the
convention of asking for exact hits: by Remark~\ref{rem:resolution} no two
distinct tours lie within $4/(n+2)$ of each other, so allowing any smaller
radius is the same requirement.

\begin{theorem}[Lower bound]
\label{thm:lower}
For every $g \ge 3$ there is a Euclidean instance on $n = 4g$ cities and a
tolerance $c$ with $N_c = g+1$ such that, for $k = g$ and any
$\delta \in (0,1)$, the uniform sampler on $\Omega_c$ recovers the optimal set
with probability $1-\delta$ from $M \ge (g+1)\ln(g/\delta)$ samples, whereas any
sampler $\mu$ whose smallest mass falls on a tour of cost $c\,C^{*}$ needs more
than $(g+1)\Delta(\mu)/2$ samples to recover it with probability even $1/2$.
That is, every selection rule returns a set with
$D(\Pi) \le \tfrac{g+1}{2g+1} D^{*}_g(\Omega_c)$.
\end{theorem}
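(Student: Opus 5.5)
The proof has three parts: a geometric construction that forces the feasible region to be exactly $\{\pi_0\}\cup\{\pi_1,\dots,\pi_g\}$, a sample-count comparison between the uniform sampler and a non-uniform one, and the conditional dispersion bound.

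\textbf{Step 1 (the instance and its feasible region).} We use the ring-of-cells instance of Figure~\ref{fig:gadget} with $g\ge 3$ cells, giving $n=4g$. The parameters are chosen so that the inter-cell edges, together with $H$ and $L$, are much shorter than any alternative connection.

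First, we show that for small enough $\vartheta$ every tour of cost at most $C^{*}+\eta$ has the structural shape. That is, it enters each cell once through $p^1_j$, leaves through $p^3_j$, and uses one of the two internal traversals. The plan is a gap argument. Any tour that deviates from this shape, by using a non-ring chord or visiting a cell twice, pays at least some fixed $\gamma>0$ independent of $\vartheta$. Meanwhile $\eta=\Theta(\vartheta)\to 0$ by Equation~\ref{eq:eta}. So choosing $\vartheta$ with $\eta<\gamma$ leaves only structural tours within budget.

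Next, the cost of a structural tour is $C^{*}+\eta\cdot(\#\text{expensive cells})$. With $c=1+\eta/C^{*}$, this leaves $\Omega_c=\{\pi_0,\pi_1,\dots,\pi_g\}$, so $N_c=g+1$.

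\textbf{Main obstacle.} Establishing $\gamma$ rigorously is the hard step. I would first bound it crudely. Fix $L$ and $H$ small relative to the inter-cell distance $2\rho\sin(\pi/g)$. Then any tour entering or leaving a cell more than once uses at least two extra long edges, and the cheapest non-structural way of covering the four cities costs a constant more than the cheap traversal. This gives $\gamma$ bounded below by a positive constant at $\vartheta=0$, and it stays bounded away from zero by continuity.

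\textbf{Step 2 (distances and the unique optimum).} From the overlaps stated before the theorem, $d(\pi_i,\pi_j)=2\cdot 4/(2n-(n-4))=8/(n+4)=2/(g+1)$ and $d(\pi_0,\pi_i)=4/(n+2)=2/(2g+1)$, using Equation~\ref{eq:J-in-a}. Any $g$-subset other than $\Pi^{*}=\{\pi_1,\dots,\pi_g\}$ contains $\pi_0$, so its $D$ equals $2/(2g+1)$. Hence $\Pi^{*}$ is the unique optimum, with $D^{*}_g=2/(g+1)$, and every other choice achieves exactly the ratio $(g+1)/(2g+1)$.

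\textbf{Step 3 (the two sampling bounds).} The uniform sampler has $q=1/(g+1)$, so Lemma~\ref{lem:sample} with $k=g$ gives success from $(g+1)\ln(g/\delta)$ samples.

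For a sampler $\mu$ whose minimum mass $m$ sits on a tour of cost $cC^{*}$, that tour is some $\pi_i$ with $i\ge1$. It therefore belongs to $\Pi^{*}$, and $m=1/((g+1)\Delta(\mu))$. Recovering $\Pi^{*}$ requires drawing $\pi_i$, which has probability $1-(1-m)^M\le mM$. Success with probability at least $1/2$ then forces $M\ge 1/(2m)=(g+1)\Delta(\mu)/2$. For strict inequality I would use $1-(1-m)^M<mM$ when $M\ge 2$, and handle $M=1$ directly, since $m\le 1/(g+1)<1/2$.

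\textbf{Step 4 (the dispersion bound).} On the event that $\pi_i$ is missing, the pool lacks $\Pi^{*}$, and it contains at most $g$ feasible tours other than $\pi_i$. Any $g$-subset chosen by any selection rule therefore includes $\pi_0$ or is not available at all. By Step 2, such a subset has $D(\Pi)\le \frac{g+1}{2g+1}D^{*}_g(\Omega_c)$. I read the final sentence of the theorem as holding conditionally on this failure event, which by Step 3 has probability greater than $1/2$ whenever $M\le (g+1)\Delta(\mu)/2$.
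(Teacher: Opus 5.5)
Your proposal is correct and follows the paper's proof essentially step for step. It uses the same gap-at-$\vartheta=0$-plus-continuity argument to force $\Omega_c=\{\pi_{\mathbf 0}\}\cup\{\pi_{\mathbf e_j}\}_{j\le g}$ with $(c-1)C^{*}=\eta$, the same overlap-based distance computation showing the expensive tours form the unique optimum, Lemma~\ref{lem:sample} with $q=1/(g+1)$ for the uniform sampler, and the estimate $1-(1-m)^M\le mM$ with $m=1/((g+1)\Delta(\mu))$ for the non-uniform one. Your strict version $1-(1-m)^M<mM$ for $M\ge 2$ is a small improvement: the paper only shows the success probability is at most $1/2$ when $M\le (g+1)\Delta(\mu)/2$, which is just short of the ``more than'' in the statement.
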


The proof is in Appendix~\ref{app:gadget}. What the theorem rules out is any reward that separates $\pi_0$ from the rest,
say by a gap $\Gamma > 0$. By Proposition~\ref{prop:gibbs} such a reward puts
the smallest mass on one of the expensive tours and gives
$\Delta(\mu_\alpha) = (e^{\Gamma/\alpha} + g)/(g+1)$, so $\Delta$ can be driven
as high as one likes by lowering $\alpha$. Sharpening the budget constraint in this way
therefore makes matters worse, while relaxing it flattens the reward but pushes
the policy towards the uniform distribution over \emph{all} tours, where almost
nothing is feasible. A reward that is already flat on $\Omega_c$, such as the
hinge of Equation~\ref{eq:hinge}, escapes the dilemma rather than balancing it:
its gap is zero, so $\Delta = 1$ at every temperature.

\subsection{The end-to-end guarantee}
\label{sec:endtoend}

Lemma~\ref{lem:sample} says when the pool contains an optimal set, and Theorem~\ref{thm:greedy} says what the selection rule does with such a pool. Composing them gives the guarantee in the form it is used, and the equal-cardinality property of Equation~\ref{eq:equal-cardinality} sharpens what
it says on the similarity scale at no algorithmic cost.

\begin{theorem}[End-to-end guarantee]
\label{thm:endtoend}
Fix $\delta \in (0,1)$ and an optimal set $\Pi^{*} \subseteq \Omega_c$, and
write $J^{*} = 1 - D^{*}_k(\Omega_c)$. If
\[
  M \;\ge\; \frac{\ln(k/\delta)}{q(\mu;\Pi^{*})},
\]
then with probability at least $1-\delta$ the set $\Pi$ returned by
Equation~\ref{eq:greedy} satisfies
\[
  D(\Pi) \;\ge\; \tfrac{1}{2}\,D^{*}_k(\Omega_c)
  \qquad\text{and}\qquad
  J(\Pi) \;\le\; \frac{1 + 3J^{*}}{3 + J^{*}} .
\]
Uniformly over target sets it suffices to take
$M \ge \Delta(\mu)\,N_c \ln(k/\delta)$, which for the uniform sampler on
$\Omega_c$ is $N_c \ln(k/\delta)$.
\end{theorem}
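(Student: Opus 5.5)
The plan is to compose Lemma~\ref{lem:sample} with Theorem~\ref{thm:greedy} for the distance bound, and then to obtain the sharper similarity bound by rerunning the greedy analysis under the Hamming metric of Remark~\ref{rem:hamming}.

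\textbf{Distance bound.} Under $M \ge \ln(k/\delta)/q(\mu;\Pi^{*})$, Lemma~\ref{lem:sample} shows that with probability at least $1-\delta$ the pool contains $\Pi^{*}$. Since $\mathcal{P}_M \subseteq \Omega_c$, on that event $D^{*}_k(\mathcal{P}_M) = D^{*}_k(\Omega_c)$. Proposition~\ref{prop:metric} makes $(\mathcal{P}_M, d)$ a finite metric space, so Theorem~\ref{thm:greedy} gives $D(\Pi) \ge D^{*}_k(\Omega_c)/2$.

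\textbf{Uniform sample size.} Proposition~\ref{prop:minimax} gives $q(\mu;\Pi^{*})^{-1} \le \Delta(\mu) N_c$ for every $k$-subset. Hence $M \ge \Delta(\mu) N_c \ln(k/\delta)$ meets the hypothesis for every target set. For the uniform sampler $\Delta = 1$, which gives $N_c\ln(k/\delta)$.

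\textbf{Similarity bound.} Translating the distance bound naively only yields $J(\Pi) \le (1+J^{*})/2$. To sharpen it, I would use the equal-cardinality property \eqref{eq:equal-cardinality} and work with the normalized overlap $x = a/n$.

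\begin{itemize}
\item By \eqref{eq:J-in-a}, $d$ and the normalized symmetric difference $h = (n-a)/n = 1-x$ are both strictly decreasing functions of $a$. Hence for every candidate $\pi$ the quantity $\min_j d(\pi,\pi_j)$ is a strictly increasing function of $\min_j h(\pi,\pi_j)$.
\item Consequently the argmax sets in \eqref{eq:greedy} coincide under $d$ and under $h$. Any run of the greedy rule under $d$ is therefore also a valid run under $h$, with some tie-breaking.
\item $h$ is half the normalized size of the symmetric difference, hence a metric. Theorem~\ref{thm:greedy} holds for arbitrary tie-breaking, so on the same event it gives $\min_{\Pi} h \ge \tfrac12 \max h$.
\item Because $h$ is monotone in $a$, the optimal set under $h$ and under $d$ achieve the same extremal overlap. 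Writing $x^{*}$ for the largest normalized overlap in an optimal set and $x_\Pi$ for that in $\Pi$, this reads $1 - x_\Pi \ge (1-x^{*})/2$, that is, $x_\Pi \le (1+x^{*})/2$.
\end{itemize}

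\textbf{Algebra.} From \eqref{eq:J-in-a}, $J = x/(2-x)$, which is increasing in $x$, and inversely $x = 2J/(1+J)$. Therefore
\[
J(\Pi) \le \frac{(1+x^{*})/2}{2 - (1+x^{*})/2} = \frac{1+x^{*}}{3-x^{*}} .
\]
Substituting $x^{*} = 2J^{*}/(1+J^{*})$ and clearing the common factor $1+J^{*}$ gives
\[
J(\Pi) \le \frac{1+3J^{*}}{3+J^{*}} .
\]

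\textbf{Main obstacle.} The delicate step is justifying that the $d$-greedy output inherits the $h$-greedy guarantee. This rests on two facts: the argmax of a minimum is invariant under a strictly increasing reparametrization, and Theorem~\ref{thm:greedy} holds for every tie-breaking. I would state both explicitly in the written proof.
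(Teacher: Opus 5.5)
Your proposal is correct and follows the paper's proof essentially step for step. You use Lemma~\ref{lem:sample} with Theorem~\ref{thm:greedy} for the distance bound and Proposition~\ref{prop:minimax} for the uniform sample size. You then apply Theorem~\ref{thm:greedy} a second time in the metric $(n-a)/n$, justified by both metrics being strictly decreasing in the overlap, and substitute $x^{*} = 2J^{*}/(1+J^{*})$ into $J = x/(2-x)$; your explicit remark that the argmax sets coincide and that Theorem~\ref{thm:greedy} holds under every tie-breaking is a slightly more careful version of the paper's claim that the greedy rule ``selects the same tours''.
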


\begin{proof}
By Lemma~\ref{lem:sample}, with probability at least $1-\delta$ the pool
contains $\Pi^{*}$; since $\mathcal{P}_M \subseteq \Omega_c$ this forces
$D^{*}_k(\mathcal{P}_M) = D^{*}_k(\Omega_c)$, and in particular the pool holds
at least $k$ distinct tours. We work on that event.

The first bound is Theorem~\ref{thm:greedy} applied in the metric $d$. For the
second, recall from Remark~\ref{rem:hamming} that $(n-a)/n$ is a second metric
on tours and that both it and $d$ are strictly decreasing in the overlap $a$.
The score $\min_{q \in S} d(\pi,q)$ maximized at each step of
Equation~\ref{eq:greedy} is therefore a decreasing function of
$\max_{q \in S} a(\pi,q)$ under either metric, so the greedy rule selects the
same tours and the optimal $k$-subsets coincide. Writing $a_g$ and $a^{*}$ for
the largest overlap inside the returned and the optimal set,
Theorem~\ref{thm:greedy} in the second metric gives
$(n - a_g)/n \ge (n - a^{*})/(2n)$, that is, $a_g \le (n + a^{*})/2$.
Equation~\ref{eq:J-in-a} reads $a^{*}/n = 2J^{*}/(1+J^{*})$, so
\[
  \frac{a_g}{n} \;\le\; \frac{1}{2}\Bigl(1 + \frac{2J^{*}}{1+J^{*}}\Bigr)
  \;=\; \frac{1 + 3J^{*}}{2\,(1 + J^{*})} ,
\]
and substituting into $J = (a/n)\big/(2 - a/n)$, which is increasing in $a/n$,
gives the stated bound. The last claim is Proposition~\ref{prop:minimax}
together with Equation~\ref{eq:sample-size}.
\end{proof}

% =====================================================================
% Diverse TSP -- Section 5, revised
%
% Scope decisions:
%   - The grid / large-scale subsection has been removed in full.
%   - The network itself is not a contribution and now lives in
%     Appendix C. Only the MDP factorization stays in the main text.
%   - The reachability weight W_c is gone as a symbol. The quantity it
%     named appears once, priced in Delta, with no new notation.
%   - The edge-usage regularizer (former beta_2) has been removed.
%
% Notation:
%   - Cbar(T) is the mean cost of a traversal of T. Chat is reserved,
%     as in Section 3, for the upper bound on C*.
%   - r is the fixed root of the traversal; L is the number of
%     traversals used to estimate Cbar; B is the batch size.
% =====================================================================

\section{Sampling Trees and Converting Them to Tours}
\label{sec:model}

The sampler asked for in Section~\ref{sec:method} is uniform on $\Omega_c$, and
Proposition~\ref{prop:gibbs} says a flat reward with an entropy bonus produces
it. Neither statement says how to generate a tour, and generating one directly
is hard for a simple reason: the feasible set is very hard to hit. A tour drawn
at random costs $\Theta(n)$ against an optimum of $\Theta(\sqrt n)$, so a
tour-level policy has to place almost every edge well before a single sample
lands inside the budget and the hinge stops being constant. The entropy bonus
makes this worse rather than better, since it pulls the policy towards the
uniform distribution over \emph{all} tours, which is exactly where nothing is
feasible. Our own attempts to train a tour-level policy did not converge to a
usable model, which is consistent with this.

We therefore keep the objective and change the object being generated: the
policy samples a spanning tree, and a separate construction turns the tree into
a tour. Section~\ref{subsec:trees} illustrates our method, Section~\ref{subsec:loss} shows the learning objective, and Section~\ref{subsec:pushforward} analyzes what this method can and cannot guarantee compared with the uniform sampling on tours.

\subsection{Generating trees and converting them}
\label{subsec:trees}

We formulate the construction as a Markov decision process. Let
$\tau = (\tau_0,\dots,\tau_{\kappa-1})$ be the ordered list of distinct edges
selected during generation, with $\kappa = n-1$. At step $t$ the state is
$\tau_{0:t-1}$ and the action set contains every edge that does not close a
cycle, which depends only on the partial solution and is non-empty until step
$\kappa$. The policy factorizes as
\begin{equation}
p(\tau \mid G) = \prod_{t=0}^{\kappa-1}
  p_{\theta}\!\big(\tau_t \mid G, \tau_{0:t-1}\big),
\label{eq:autoregressive-policy}
\end{equation}
where each conditional puts mass only on the admissible edges. Any $n-1$
acyclic edges on $n$ vertices form a spanning tree, so the sequence always
terminates in a valid object. The network computing these conditionals follows
\citet{ma2019combinatorial} and is not a contribution of this work; we describe
it in Appendix~\ref{app:architecture}. We draw from the conditionals rather than
taking their $\arg\max$, since $\arg\max$ decoding is the zero-temperature limit
of Proposition~\ref{prop:gibbs} and collapses the pool onto a single solution.

Write $\Phi(T,\pi)$ for the probability that the construction applied to a tree
$T$ returns the tour $\pi$. We use two constructions.

\begin{itemize}
  \item \textbf{Method~1.} Run a depth-first search of $T$ from a fixed root
        $r$, choosing uniformly among the unvisited neighbours whenever a
        vertex has more than one, and take the order in which vertices are first
        visited; closing that order into a cycle gives $\pi$. This is the
        shortcut of the Euler circuit of the doubled tree.
  \item \textbf{Method~2.} Let $O$ be the set of odd-degree vertices of $T$, and
        add to $T$ a perfect matching on $O$ produced by a second policy of the
        same form.\footnote{The matching policy is the same network with
        $\kappa = |O|/2$ and a mask that admits an edge whenever both endpoints
        are still unmatched. Every statement below holds for it verbatim: the
        edges of a matching are pairwise disjoint, so every ordering of them is
        admissible and Proposition~\ref{prop:no-bias} applies unchanged.} The
        result has all degrees even, and shortcutting its Euler circuit gives
        $\pi$. This is the Christofides construction with a learned matching.
\end{itemize}

Both constructions are randomized, so $\Phi(T,\cdot)$ is in general a
distribution rather than a point mass. Figure~\ref{fig:flow} shows the resulting
pipeline.

\begin{figure}[htbp]
    \centering
    \includegraphics[width=0.95\linewidth]{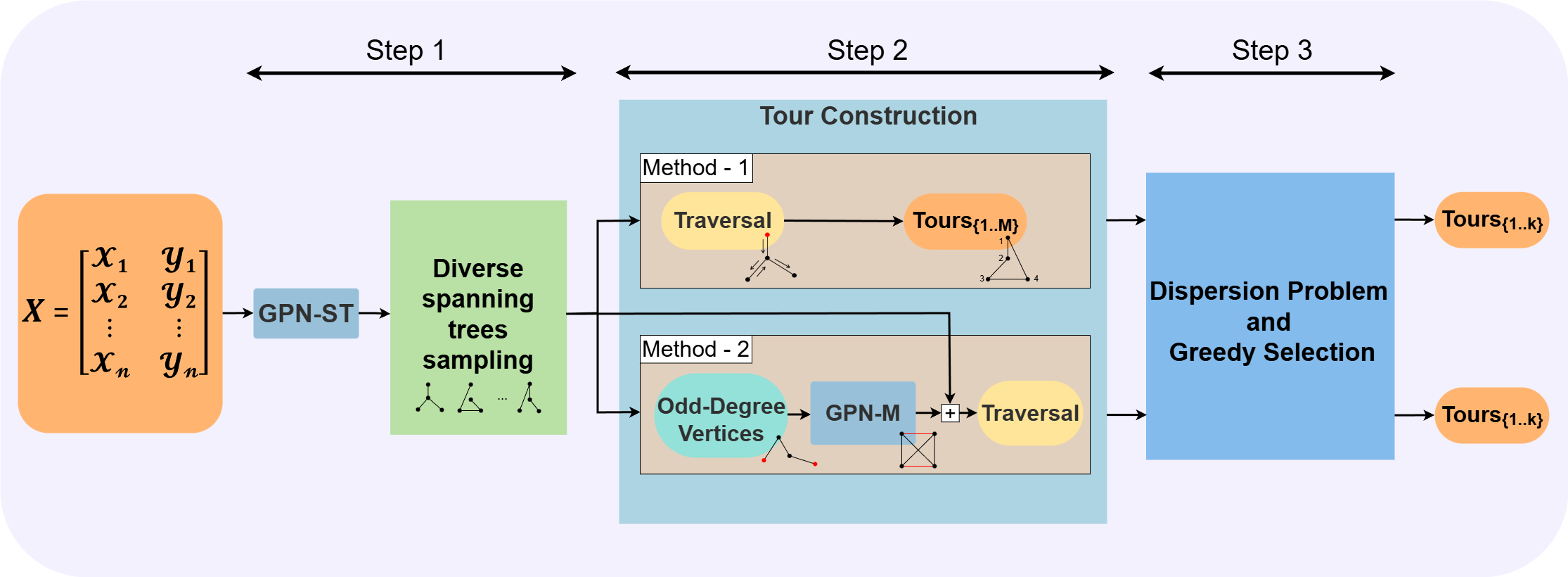}
    \caption{The pipeline. The network samples spanning trees; Method~1
    traverses a tree directly, while Method~2 first completes it with a learned
    matching on the odd-degree vertices; the resulting tours are filtered and
    passed to the selection stage.}
    \label{fig:flow}
\end{figure}

\subsection{Training objective}
\label{subsec:loss}

The policy is defined step by step, so its entropy accumulates step by step.
By the chain rule the per-step conditional entropies sum exactly to the entropy
of the sequence distribution, and we therefore define the entropy term as the
sum
\begin{equation}
  \mathcal{H}(\theta) \;=\; \sum_{t=1}^{\kappa}
    \mathcal{H}\bigl( p_\theta(\cdot \mid G, \tau_{0:t-1}) \bigr),
  \label{eq:entropy-term}
\end{equation}
rather than the average used by \citet{ma2019combinatorial}. With the sum, the
coefficient multiplying $\mathcal{H}(\theta)$ is exactly the temperature
$\alpha$ of Proposition~\ref{prop:gibbs}, so one variable governs both the
implementation and the analysis.

The reward is defined on tour cost, but the policy outputs a tree. Since the
construction is randomized, the cost of the resulting tour is a random variable
given $T$, and we estimate its mean with $L$ independent runs,
\begin{equation}
  \bar{C}(T) \;=\; \frac{1}{L}\sum_{l=1}^{L} C\bigl(\pi^{(l)}\bigr),
  \qquad \pi^{(l)} \sim \Phi(T,\cdot),
  \label{eq:tour-cost-est}
\end{equation}
and apply the hinge of Equation~\ref{eq:hinge} to it,
\begin{equation}
  R(T) \;=\; -\bigl( \bar{C}(T) - c\,\widehat{C} \bigr)_{+} .
  \label{eq:hinge-reward}
\end{equation}
The reward is thus flat on the set
\begin{equation}
  \Omega_c^{\mathrm{ST}} \;=\;
  \{\, T : \bar{C}(T) \le c\,\widehat{C} \,\}
  \label{eq:feasible-trees}
\end{equation}
of trees that meet the budget on average. For the baseline we take the reward of
the tree produced by greedy rollout, evaluated through the same hinge so that
the two are on one scale, and centre it over the batch:
$b_i = \widetilde{R}_i + \frac{1}{B}\sum_{j\le B}(R_j - \widetilde{R}_j)$, where
$\widetilde{R}_i$ is the greedy return for instance $i$. Putting the pieces
together, and writing $n_\pi$ for the number of times the tour $\pi$ occurs
among the $BL$ tours built in a batch, the policy gradient is
\begin{equation}
\begin{aligned}
\nabla_{\theta} J(\theta)
&= \frac{1}{B}\sum_{i=1}^{B}
   \Bigg[ \bigl( R_i - b_i \bigr)
   \sum_{t=1}^{\kappa} \nabla_{\theta} \log p_{\theta}\bigl( \tau_{i,t} \mid \tau_{i,0:t-1} \bigr)
   + \alpha \sum_{t=1}^{\kappa}
     \nabla_{\theta}\mathcal{H}\bigl( p_{\theta}(\cdot \mid \tau_{i,0:t-1}) \bigr)
   \Bigg] \\
&\qquad
   \;-\; \beta\,\nabla_\theta\, \mathbb{E}\bigl[ \log n_\pi \bigr] ,
\end{aligned}
\label{eq:policy-gradient}
\end{equation}
whose last term is explained in Section~\ref{subsec:pushforward}. Evaluating
Equation~\ref{eq:tour-cost-est} costs $O(nL)$ per sample, which is negligible
beside the $O(n^3)$ cost of generating one tree, so the overall complexity is
unchanged.

\subsection{(Un-)biased sampling from sequences to trees to tours}
\label{subsec:pushforward}

The objective is defined over tours while the policy puts mass on edge
sequences, so two mappings sit between them: sequences to trees, and trees to
tours. The first turns out to be harmless and the second does not.

\paragraph{Sequences to trees.}
Write $m(T)$ for the number of edge sequences that generate the tree $T$.

\begin{proposition}[Unbiased sampling of trees]
\label{prop:no-bias}
The entropy term of Equation~\ref{eq:entropy-term} is an unbiased estimator of
the entropy of the sequence distribution, and the maximizer of the objective
samples uniformly on $\Omega_c^{\mathrm{ST}}$.
\end{proposition}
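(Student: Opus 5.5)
The plan has three steps: identify the summed entropy term with the entropy of the sequence law, pass from sequences to trees, and only then invoke Proposition~\ref{prop:gibbs}.

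\textbf{Step 1: the entropy term.} By the factorization of Equation~\ref{eq:autoregressive-policy} and the chain rule,
\[
H(p) \;=\; \sum_{t} \mathbb{E}_{\tau_{0:t-1}}\bigl[\mathcal{H}\bigl(p_\theta(\cdot \mid G,\tau_{0:t-1})\bigr)\bigr].
\]
Evaluating the summed conditional entropies of Equation~\ref{eq:entropy-term} along a prefix drawn from $p$ is therefore an unbiased one-sample estimator of $H(p)$. The reason for using the sum rather than the average is that averaging would rescale this estimator by $1/\kappa$ and bias it.

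\textbf{Step 2: from sequences to trees.} I would show that $m(T)$ is the same for every spanning tree. Each $T$ has exactly $\kappa=n-1$ edges, and any ordering of them is admissible, because every prefix of an acyclic set is acyclic. Hence $m(T)=(n-1)!$ for all $T$. The sequence-to-tree map is therefore uniformly $(n-1)!$-to-one on its image.

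The reward $R$ depends on $\tau$ only through the tree $T$. Apply Proposition~\ref{prop:gibbs} on the finite space of sequences. The unique maximizer is $p_\alpha(\tau)\propto e^{R(T(\tau))/\alpha}$, so the induced tree law is
\[
P(T)\;\propto\; m(T)\,e^{R(T)/\alpha}\;=\;(n-1)!\,e^{R(T)/\alpha}.
\]
The factor $(n-1)!$ is constant and cancels in normalization, so $P(T)\propto e^{R(T)/\alpha}$. For the hinge of Equation~\ref{eq:hinge-reward}, $R\equiv 0$ on $\Omega_c^{\mathrm{ST}}$. Conditioning on $\Omega_c^{\mathrm{ST}}$ then gives the uniform distribution, exactly as in the last clause of Proposition~\ref{prop:gibbs}.

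\textbf{Main obstacle.} The delicate point is that $\bar C(T)$ is a random estimate computed from $L$ runs of $\Phi$, so $R$ is not a deterministic function of $T$. I would handle this by reading Proposition~\ref{prop:gibbs} with $R$ replaced by its conditional expectation given $T$, or equivalently by treating $\Omega_c^{\mathrm{ST}}$ as defined through $\mathbb{E}[\bar C(T)]$. The entropy term is unaffected, and the pushforward argument goes through unchanged for any reward that depends on $\tau$ only through $T$.

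I would also remark that the same argument covers the matching policy of Method~2. Its $|O|/2$ edges are pairwise disjoint, so all $(|O|/2)!$ orderings are admissible, and the multiplicity is again constant.
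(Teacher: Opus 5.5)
Your Steps~1 and~2 are the paper's proof: the chain rule for the entropy claim, $m(T)=(n-1)!$ because every prefix of a tree's edge set is acyclic, the Gibbs maximizer on sequences, cancellation of the constant $(n-1)!$, and flatness of the hinge on $\Omega_c^{\mathrm{ST}}$. The paper simply treats $\bar C(T)$ as a function of $T$, so your ``main obstacle'' paragraph goes beyond what the paper needs. One claim in it is false.

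Replacing $R$ by $r(T)=\mathbb{E}[R\mid T]$ is legitimate. The $L$ runs are independent of $\tau$ given $T$, so the objective equals $\mathbb{E}_p[r(T)]+\alpha H(p)$, and Proposition~\ref{prop:gibbs} applies to $r$. It is not, however, equivalent to defining $\Omega_c^{\mathrm{ST}}$ through $\mathbb{E}[\bar C(T)]$. The quantity $r(T)=-\mathbb{E}\bigl[(\bar C(T)-c\,\widehat{C})_+\mid T\bigr]$ vanishes only if $\bar C(T)\le c\,\widehat{C}$ almost surely. With positive probability all $L$ runs return the costliest tour in the support of $\Phi(T,\cdot)$, so this holds exactly when every tour in that support meets the budget. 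For instance, take $L=1$ and two equally likely traversals costing $c\,\widehat{C}\pm 1$. The mean meets the budget, yet $r(T)=-1/2$, so the Gibbs law gives $T$ relative weight $e^{-1/(2\alpha)}$ instead of $1$. Under your reading the maximizer is therefore uniform on $\{T : C(\pi)\le c\,\widehat{C}\ \text{for all}\ \pi\in\mathrm{supp}\,\Phi(T,\cdot)\}$. This set is always contained in $\{T:\mathbb{E}[\bar C(T)]\le c\,\widehat{C}\}$, and the inclusion is strict in general; the two agree when, for example, each $\Phi(T,\cdot)$ is a point mass.
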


\begin{proof}
The first claim is the chain rule for entropy applied to
Equation~\ref{eq:autoregressive-policy}. For the second, fix a tree $T$ and any
ordering of its edges. Every prefix is a subset of $T$, hence acyclic, hence
admissible, so all $(n-1)!$ orderings are valid sequences; and no ordering of a
non-tree edge set produces $T$. Therefore $m(T) = (n-1)!$ for every tree. Since
the reward depends on a sequence only through its tree, the maximizer assigns
equal probability to all $m(T)$ sequences generating $T$, and summing gives
$p_\theta(T) = m(T)e^{R(T)/\alpha}/Z$; the factor $m(T)$ is the same for every
tree, so it cancels against the normalizing constant and leaves the Gibbs
distribution over trees. By Equation~\ref{eq:hinge-reward} the reward is zero
throughout $\Omega_c^{\mathrm{ST}}$, so that distribution is uniform there.
\end{proof}

Outside $\Omega_c^{\mathrm{ST}}$ the mass decays as
$\exp(-(\bar{C}(T) - c\,\widehat{C})/\alpha)$. The constancy of $m(\cdot)$ is
what makes the argument work: had it varied across trees, the policy would have
been rewarded for favouring trees reachable in more ways. One caveat attaches to
the statement: it concerns the maximizer of the objective, which a parametric
policy reaches only approximately, and any non-uniformity among the sequences
generating one tree is a separate approximation error that $m(T)$ does not
measure.

\paragraph{Trees to tours.}
Uniformity on trees is not uniformity on tours: one tree produces many tours and
many trees produce the same tour, so pushing the tree law through $\Phi$ gives
$p_\theta(\pi) = \sum_T p_\theta(T)\Phi(T,\pi)$ and the weights do not survive
intact. Two questions follow. Can the relaxation represent the sampler we want
at all, and what does the sampler it actually produces look like? The answers
are yes and no, and the gap between them is where the rest of the paper acts.

\begin{proposition}[The relaxation still achieves uniform sampling]
\label{prop:reachable}
For every distribution $\nu$ on $\Omega_c$ there is a policy of the form of
Equation~\ref{eq:autoregressive-policy} whose tree distribution is supported in
$\Omega_c^{\mathrm{ST}}$ and whose pushforward through either construction is
exactly $\nu$.
\end{proposition}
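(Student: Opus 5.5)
The plan is to build, for every feasible tour $\pi \in \Omega_c$, a single spanning tree $T_\pi$ that the construction maps to $\pi$ \emph{deterministically}, that is, with $\Phi(T_\pi,\cdot)$ a point mass at $\pi$. I would then let the policy sample $T_\pi$ with probability $\nu(\pi)$. The pushforward is then $\sum_\pi \nu(\pi)\,\delta_\pi = \nu$ exactly. The maps $\pi \mapsto T_\pi$ are injective, since $T_\pi$ will be $\pi$ minus one edge and so determines $\pi$, which means no mass is merged. The statement therefore reduces to three checks. First, such a $T_\pi$ exists for each construction. Second, $T_\pi \in \Omega_c^{\mathrm{ST}}$. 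Third, the resulting tree law is expressible in the form of Equation~\ref{eq:autoregressive-policy}.

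\textbf{The tree.} Given $\pi$, I delete one of the two edges of $\pi$ incident to the fixed root $r$. The result $T_\pi$ is a Hamiltonian path with $r$ as one endpoint, hence a spanning tree.

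For Method~1, depth-first search from $r$ on a path never has a choice, because every vertex has at most one unvisited neighbour. So the visiting order is the path order, and closing it restores the deleted edge, returning $\pi$ with probability one.

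For Method~2, the odd-degree vertices of a path on $n \ge 3$ vertices are exactly its two endpoints. The only perfect matching on $O$ is therefore the deleted edge itself, so any matching policy, learned or not, returns it. The augmented graph is the cycle $\pi$, whose Euler circuit visits each vertex once and needs no shortcut. Hence the output is again $\pi$ deterministically.

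In both cases every run produces $\pi$, so $\bar C(T_\pi) = C(\pi)$ for every $L$. This gives $C(\pi) \le c\,C^{*} \le c\,\widehat{C}$, so $T_\pi \in \Omega_c^{\mathrm{ST}}$ by Equation~\ref{eq:feasible-trees} and Remark~\ref{rem:chat}.

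\textbf{Expressibility.} The target tree law is $P(T_\pi) = \nu(\pi)$, and it is zero elsewhere. I fix for each tree one admissible ordering of its edges; every ordering is admissible, by the argument in the proof of Proposition~\ref{prop:no-bias}. I then put the resulting distribution on sequences and define the conditionals by the chain rule, $p(\tau_t \mid \tau_{0:t-1}) = P(\tau_{0:t})/P(\tau_{0:t-1})$ on prefixes of positive mass and arbitrarily on the rest. These conditionals charge only admissible edges, because each chosen sequence is acyclic at every prefix. The product in Equation~\ref{eq:autoregressive-policy} then reproduces $P$ exactly. Its support is $\{T_\pi : \pi \in \mathrm{supp}\,\nu\} \subseteq \Omega_c^{\mathrm{ST}}$.

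The step I expect to need the most care is this last one, because "a policy of the form of Equation~\ref{eq:autoregressive-policy}" must be read as an arbitrary family of admissible conditionals rather than the specific network. I would state that reading explicitly, since a finite-capacity network only approximates it, in the same spirit as the caveat after Proposition~\ref{prop:no-bias}. I would also remark that taking $\nu$ uniform on $\Omega_c$ shows the optimal sampler of Proposition~\ref{prop:minimax} is representable, although it is generally not the Gibbs maximizer on trees.
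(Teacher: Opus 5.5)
Your proposal is correct and follows the paper's proof essentially step by step: delete an edge of $\pi$ at the root to get a Hamiltonian path, check that both constructions return $\pi$ deterministically so that $\bar{C}(T)=C(\pi)$ and the tree lies in $\Omega_c^{\mathrm{ST}}$, and note that any law on sequences factorizes autoregressively. The only difference is cosmetic: the paper avoids choosing which edge to delete by placing mass $\nu(\pi)/2$ on each of the two trees obtained from $\pi$, whereas you pick one tree and rely on injectivity of $\pi \mapsto T_\pi$, which works equally well.
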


\begin{proof}
Let $\mathcal{Q}$ collect the trees obtained by deleting from a tour of
$\Omega_c$ one of its two edges at the root $r$. Such a tree is a Hamiltonian
path with $r$ as an endpoint, hence a spanning tree. Under Method~1 the search
starts at $r$, which has one unvisited neighbour, and after each step the
current vertex again has exactly one, so no branch choice ever arises: the visit
order is the path order and closing it returns the tour $\pi$ it came from.
Under Method~2 the odd-degree vertices of the path are its two endpoints, which
are the endpoints of the deleted edge, so the matching on them is that edge
alone; adding it back gives $\pi$, already a cycle and left unchanged by
shortcutting. Either way the construction is deterministic, so
$\Phi(T,\cdot) = \delta_\pi$ and every run returns cost $C(\pi)$, giving
$\bar{C}(T) = C(\pi)$ and hence $\mathcal{Q} \subseteq \Omega_c^{\mathrm{ST}}$.
Each tour of $\Omega_c$ contributes exactly two trees to $\mathcal{Q}$, so
placing mass $\nu(\pi)/2$ on each of the two trees that close into $\pi$ gives a
distribution on $\mathcal{Q}$ that pushes forward to $\nu$. Finally, every
ordering of a tree's edges is admissible and any joint distribution over
sequences factorizes as in Equation~\ref{eq:autoregressive-policy}, so this
distribution over trees, like any other, is induced by some policy in the class.
\end{proof}

By Proposition~\ref{prop:reachable}, the optimal sampler is within reach of this architecture. However, we lack the  objective that singles it out. The reward we have reads only tour cost, and every tree in $\Omega_c^{\mathrm{ST}}$ scores the same, so it cannot tell apart the trees that traverse to a single tour from those that spread over many. The exponential factor therefore leaves the sum, and the law we actually induce is
\begin{equation}
  p_\theta(\pi) \;\propto\; \sum_{T \in \Omega_c^{\mathrm{ST}}} \Phi(T,\pi) ,
  \label{eq:induced-tour-law}
\end{equation}
up to the exponentially suppressed contribution of trees outside the budget.
Restricting to $\Omega_c$ leaves these relative weights unchanged, so by
Equation~\ref{eq:delta} we pay a $\Delta$ equal to the ratio of the average of
the right-hand side to its smallest value, which exceeds one as soon as some
tours are reachable by more trees than others.

We can at least measure this. Because no reward term survives in
Equation~\ref{eq:induced-tour-law}, the number of times a tour recurs within a
batch estimates its weight there up to a constant, and the last term of
Equation~\ref{eq:policy-gradient} pushes down on the tours that recur most, at a
cost of $O(BLn)$ per batch. This treats the symptom rather than the cause and
does not cancel the ratio, so we present it as a heuristic. 
\section{The Conflict Graph}
\label{sec:conflict}

Section~\ref{sec:method} asked the pool to contain a whole optimal set. However, it may not be necessary in many cases. The reason is that, Equation~\ref{eq:greedy} never looks for a
particular set of tours; it needs only that the pool hold \emph{some} $k$ tours that lie far apart. The two requirements agree when the best separation is achieved by essentially one set of tours, as on the instances of Section~\ref{sec:lower}, where the optimal set is unique and has to be hit exactly. This section replaces the coverage requirement by the weaker one. Doing so isolates a single scalar through which the sampler enters the guarantee and in the end of Section~\ref{subsec:objective} we would show that the corresponding objective function has the monotone property, which make it suitable for trainable algorithm.

\subsection{The independent set v.s. D-TSP}
\label{subsec:identity}

\begin{definition}[Conflict graph]
\label{def:conflict}
Let $\rho > 0$ and let $\pi_1,\dots,\pi_N$ be a finite list of tours, possibly
with repetitions. The conflict graph $G_\rho(\pi_1,\dots,\pi_N)$ has vertex set
$\{1,\dots,N\}$ and joins $i \ne j$ whenever $d(\pi_i,\pi_j) < \rho$. We write
$G_\rho(S)$ when the list is a set. An independent set is a collection of
indices carrying pairwise distinct tours at distance $\rho$ or more, and
$\mathrm{ind}(H)$ denotes the size of a largest one.
\end{definition}

An independent set of $G_\rho(S)$ is a set of tours that are pairwise at
distance $\rho$ or more, which is exactly what D-TSP asks for. The two problems
are therefore the same problem read in two directions.

\begin{lemma}[Dispersion as an independence number]
\label{lem:identity}
For any finite $S$ with $|S| \ge k$,
\[
  D^{*}_k(S) \ge \rho
  \qquad\Longleftrightarrow\qquad
  \mathrm{ind}\bigl(G_\rho(S)\bigr) \ge k .
\]
In particular
$\mathrm{ind}(G_\rho(\Omega_c)) = \max\{\, k : D^{*}_k(\Omega_c) \ge \rho \,\}$.
\end{lemma}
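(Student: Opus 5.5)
The plan is to prove the equivalence directly by unwinding the two definitions, one direction at a time, and then read off the second claim from monotonicity of $D^{*}_k(S)$ in $k$. Throughout, $S$ is a set, so the vertices of $G_\rho(S)$ carry pairwise distinct tours. An independent set in $G_\rho(S)$ is then exactly a subset of $S$ whose pairwise distances are all at least $\rho$.

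For the forward direction, suppose $D^{*}_k(S) \ge \rho$. Since $S$ is finite and $|S| \ge k$, the maximum in Equation~\ref{eq:objective} is attained by some $\Pi \subseteq S$ with $|\Pi| = k$ and $D(\Pi) \ge \rho$. Hence $d(\pi,\pi') \ge \rho$ for all distinct $\pi,\pi' \in \Pi$, so no edge of $G_\rho(S)$ joins two members of $\Pi$. This gives $\mathrm{ind}(G_\rho(S)) \ge k$.

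For the converse, take an independent set of size at least $k$ and keep any $k$ of its elements; a subset of an independent set is still independent. These $k$ tours are distinct and pairwise at distance $\ge \rho$, so this $k$-subset $\Pi$ has $D(\Pi) \ge \rho$. Therefore $D^{*}_k(S) \ge D(\Pi) \ge \rho$.

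For the ``in particular'' claim, apply the equivalence to $S = \Omega_c$ for each $k \le N_c$. I will also note that $D^{*}_k(\Omega_c)$ is non-increasing in $k$: dropping a tour from a $k$-set cannot decrease its minimum pairwise distance. Consequently the set $\{k : D^{*}_k(\Omega_c) \ge \rho\}$ is exactly $\{k \le N_c : k \le \mathrm{ind}(G_\rho(\Omega_c))\}$, and its maximum is $\mathrm{ind}(G_\rho(\Omega_c))$.

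The only delicate point is the boundary convention. $D$ is defined only for sets of at least two tours, so $k=1$ needs a convention, for instance $D^{*}_1 = +\infty$ with every singleton independent. The set is also empty if $\mathrm{ind} < 2$. I would state that convention, or restrict to $k \ge 2$ as in Definition~\ref{def:dtsp}, and otherwise expect no obstacle.
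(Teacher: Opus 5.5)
Your proposal is correct and follows the paper's own argument: both identify independent sets of $G_\rho(S)$ with subsets of $S$ whose pairwise distances are all at least $\rho$, and both read off the second claim via monotonicity of $D^{*}_k$ in $k$. Your remark on the $k=1$ convention, and on the set $\{k : D^{*}_k(\Omega_c) \ge \rho\}$ being empty when $\mathrm{ind}(G_\rho(\Omega_c)) < 2$ under the restriction $k \ge 2$, is a legitimate refinement that the paper's proof passes over; note also that the equivalence applied at each $k$ already gives the second claim, so the monotonicity step is redundant in your derivation.
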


\begin{proof}
A subset of $S$ is independent in $G_\rho(S)$ precisely when all its pairwise
distances are at least $\rho$, that is, when $D(\Pi) \ge \rho$. Such a subset of
size $k$ exists if and only if $D^{*}_k(S) \ge \rho$ by
Equation~\ref{eq:objective}. The second statement follows because $D^{*}_k(S)$
is non-increasing in $k$.
\end{proof}

The equivalence is exact rather than a relaxation, and it inverts the objective:
$D^{*}_k$ reports the separation available at a given budget of tours, while
$\mathrm{ind}(G_\rho(\Omega_c))$ reports how many tours are available at a given
separation. 

Now, it is clear that the sampled graph is an induced subgraph of the instance graph $G_\rho(\Omega_c)$ and
\begin{equation}
  \mathrm{ind}\bigl(G_\rho(\mathcal{P}_M)\bigr)
  \;\le\;
  \mathrm{ind}\bigl(G_\rho(\Omega_c)\bigr),
  \label{eq:ind-monotone}
\end{equation}
with the sampled tours $\mathcal{P}_M$, where $\mathcal{P}_M \subseteq \Omega_c$. Composing Lemma~\ref{lem:identity} with Theorem~\ref{thm:greedy} closes the
loop. If $\mathrm{ind}(G_\rho(\mathcal{P}_M)) \ge k$, then
$D^{*}_k(\mathcal{P}_M) \ge \rho$, and the set $\Pi$ returned by
Equation~\ref{eq:greedy} on the pool satisfies $D(\Pi) \ge \rho/2$. The
selection stage is therefore untouched, and the demand on the sampler is now
explicit: make the independence number of the pool reach $k$ at as large a
$\rho$ as possible. The largest $\rho$ the pool sustains is exactly $D^{*}_k(\mathcal{P}_M)$, so a
better sampler is one that raises $D^{*}_k(\mathcal{P}_M)$. That is, assign probability on each vertex such that the $\rho$ value can be as high as possible and the sampled graph has $\mathrm{ind}(G_\rho(\mathcal{P}_M)) \ge k$ with high probability.

\subsection{Analysis on \texorpdfstring{$\rho$}{rho}}
\label{subsec:collision}
To analyze what is the best sampler that can maximize $\rho$ such that the $\mathrm{ind}(G_\rho(\mathcal{P}_M))$ can reach $k$ with high probability, let $\mu$ be the sampler conditioned on $\Omega_c$, as in
Section~\ref{sec:method}, and define the \emph{collision rate}
\begin{equation}
  p_\mu(\rho) \;=\; \Pr\bigl[\, d(X,Y) < \rho \,\bigr],
  \qquad X, Y \sim \mu \ \text{independently}.
  \label{eq:collision}
\end{equation}
The event $X = Y$ is included, since $d(X,X) = 0$. In the following we show that $p_\mu(\rho)$ can bound the quality of the sampler on its own.

\begin{proposition}[Caro–Wei Sampling bound]
\label{prop:carowei}
Draw $M$ tours i.i.d.\ from $\mu$. Then
\begin{equation}
  \mathbb{E}\bigl[\mathrm{ind}(G_\rho(\mathcal{P}_M))\bigr]
  \;\ge\; \frac{M}{1 + (M-1)\,p_\mu(\rho)} ,
  \label{eq:carowei}
\end{equation}
which increases in $M$ to $1/p_\mu(\rho)$. Moreover, with probability at least
$1-\delta$, $\mathrm{ind}(G_\rho(\mathcal{P}_M))$ exceeds the right-hand side of
Equation~\ref{eq:carowei} minus $\sqrt{(M/2)\ln(1/\delta)}$.
\end{proposition}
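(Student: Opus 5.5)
The plan is to apply the Caro–Wei bound to the random conflict graph $H = G_\rho(\pi_1,\dots,\pi_M)$ on the multiset of draws, then pass to expectations using convexity. Caro–Wei gives $\mathrm{ind}(H) \ge \sum_{i=1}^M 1/(1+\deg_H(i))$.

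The first step is to check that this applies with the indexing of Definition~\ref{def:conflict}. Repeated tours are joined, because $d(X,X)=0<\rho$. Hence an independent set of indices carries pairwise distinct tours at distance at least $\rho$. Its size is therefore at most $\mathrm{ind}(G_\rho(\mathcal{P}_M))$ of the pool as a set. The bound on indices thus transfers to the pool, whichever reading of $\mathcal{P}_M$ is used.

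The second step takes expectations. The function $x\mapsto 1/(1+x)$ is convex, so Jensen's inequality gives $\mathbb{E}[1/(1+\deg(i))] \ge 1/(1+\mathbb{E}\deg(i))$. Each of the other $M-1$ draws is independent of $\pi_i$ and joined to it with probability exactly $p_\mu(\rho)$ by Equation~\ref{eq:collision}. Hence $\mathbb{E}\deg(i) = (M-1)p_\mu(\rho)$, and summing over $i$ gives Equation~\ref{eq:carowei}. The monotonicity claim is a direct calculation. Write $f(M)=M/(1+(M-1)p)$. Then $f(M)=1/\bigl(p + (1-p)/M\bigr)$, which increases in $M$ because $p\le 1$, and it tends to $1/p$. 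If $p_\mu(\rho)=0$ the bound reads $M$, and it is still valid.

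The third step is the high-probability statement, which I expect to be the main obstacle. The plan is McDiarmid's bounded-differences inequality applied to $F(\pi_1,\dots,\pi_M) = \mathrm{ind}(G_\rho(\pi_1,\dots,\pi_M))$, which is a function of $M$ independent draws. Replacing a single $\pi_i$ changes $F$ by at most $1$. Deleting vertex $i$ lowers the independence number by at most one, and reinserting it with a new tour cannot lower it. So $\Pr[F \le \mathbb{E}F - t] \le \exp(-2t^2/M)$. Setting $t=\sqrt{(M/2)\ln(1/\delta)}$ gives failure probability $\delta$. Combining this with $\mathbb{E}F \ge$ the right-hand side of Equation~\ref{eq:carowei} yields the claim.

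The delicate point is the bounded-difference constant. It must be computed for the independence number of the indexed multigraph, not for the set-valued pool, and the first step shows the two agree in the required direction. A further subtlety is that McDiarmid concentrates $F$ around its own mean, not around the Caro–Wei value. The lower bound on $\mathbb{E}F$ must therefore be inserted only after the deviation bound is applied.
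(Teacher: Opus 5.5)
Your proposal is correct and follows the paper's proof essentially step for step: Caro--Wei on the conflict graph, then Jensen's inequality applied term by term with $\mathbb{E}[\deg] = (M-1)p_\mu(\rho)$, then the bounded-difference inequality with unit increments. You work on the indexed graph of the $M$ draws, where the expected-degree computation is literally true, and check that its independence number agrees with that of the pool as a set. That, together with the explicit monotonicity calculation, usefully clarifies a point the paper glosses over.
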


\begin{proof}
The Caro--Wei bound gives
$\mathrm{ind}(H) \ge \sum_{v} (1 + \deg v)^{-1}$
\citep{caro1979new, wei1981lower}. Each vertex of $G_\rho(\mathcal{P}_M)$ has
$\mathbb{E}[\deg] = (M-1)p_\mu(\rho)$ by Equation~\ref{eq:collision}, and
$x \mapsto (1+x)^{-1}$ is convex, so taking expectations and applying Jensen's
inequality\footnote{Jensen's inequality states that
$\mathbb{E}[f(Z)] \ge f(\mathbb{E}[Z])$ for a convex $f$.} to each of the $M$ terms gives Equation~\ref{eq:carowei}. For the
second claim, replacing one of the $M$ samples changes
$\mathrm{ind}(G_\rho(\mathcal{P}_M))$ by at most one, so the bounded-difference
inequality \citep{mcdiarmid1989method} applies with unit increments.
\end{proof}

\begin{remark}[The expectation form]
\label{rem:expectation-form}
For any $\beta > 1$, a sampler with $p_\mu(\rho) \le 1/(\beta k)$ satisfies
\[
  \mathbb{E}\bigl[\mathrm{ind}(G_\rho(\mathcal{P}_M))\bigr] \;\ge\; k
  \qquad\text{as soon as}\qquad
  M \;\ge\; \frac{\beta k - 1}{\beta - 1}.
\]
The largest usable threshold is $\rho = D^{*}_{\beta k}(\Omega_c)$: the benchmark is the best $\beta k$ tours rather than the best $k$.
\end{remark}

Lemma~\ref{lem:identity} converts an independent set into a dispersion, and Proposition~\ref{prop:carowei} says how many samples produce one. Composing them with Theorem~\ref{thm:greedy} have the following guarantee.

\begin{theorem}[End-to-end guarantee from the collision rate]
\label{thm:endtoend-conflict}
Given $\rho$, where $ D^{*}_{4k}(\Omega_c) \geq \rho > 0$ and $\delta \in (0,1)$ with $\ln(1/\delta) \le k/2$. If the
sampler satisfies
\[
  p_\mu(\rho) \;\le\; \frac{1}{4k},
\]
then drawing $M = 4k$ tours gives, with probability at least $1-\delta$,
\[
  D(\Pi) \;\ge\; \tfrac{1}{2}\,\rho .
\]
\end{theorem}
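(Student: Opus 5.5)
The plan is to chain Proposition~\ref{prop:carowei}, Lemma~\ref{lem:identity} and Theorem~\ref{thm:greedy}, with the specific choices $M = 4k$ and $p_\mu(\rho) \le 1/(4k)$ tuned so that the high-probability form of the Caro--Wei bound already clears $k$.

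\textbf{Step 1: the expectation.} Substituting $M=4k$ and $p_\mu(\rho)\le 1/(4k)$ into Equation~\ref{eq:carowei}, the denominator is at most $1 + (4k-1)/(4k) < 2$. Hence
\[
\mathbb{E}\bigl[\mathrm{ind}(G_\rho(\mathcal{P}_M))\bigr] > 2k .
\]

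\textbf{Step 2: concentration.} The second part of Proposition~\ref{prop:carowei} says that, with probability at least $1-\delta$, the independence number exceeds this right-hand side minus $\sqrt{(M/2)\ln(1/\delta)} = \sqrt{2k\ln(1/\delta)}$. The hypothesis $\ln(1/\delta)\le k/2$ bounds the deviation by $\sqrt{2k\cdot k/2} = k$. On that event, therefore, $\mathrm{ind}(G_\rho(\mathcal{P}_M)) > 2k - k = k$. This is exactly why the constants are $4$ and $k/2$: the Caro--Wei mean must land at $2k$ so that a deviation of size $k$ can be absorbed.

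\textbf{Step 3: back to dispersion.} I work on this event. By Definition~\ref{def:conflict}, an independent set consists of indices carrying pairwise distinct tours at mutual distance at least $\rho$. The pool, viewed as a set, therefore contains at least $k$ distinct tours, and these $k$ tours form an independent set of $G_\rho$ on the set of distinct pool tours. Lemma~\ref{lem:identity} then gives $D^{*}_k(\mathcal{P}_M)\ge\rho$. Running Equation~\ref{eq:greedy} on the distinct tours of the pool, Theorem~\ref{thm:greedy} (valid because $d$ is a metric by Proposition~\ref{prop:metric}) yields $D(\Pi)\ge D^{*}_k(\mathcal{P}_M)/2 \ge \rho/2$.

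The condition $\rho\le D^{*}_{4k}(\Omega_c)$ is not used in the derivation. Its role is only to say the hypothesis is meaningful: it is the largest threshold for which $\Omega_c$ itself holds $4k$ mutually $\rho$-separated tours, consistent with Remark~\ref{rem:expectation-form} with $\beta=4$.

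\textbf{Main obstacle.} The delicate point is Step 3's passage from the sampled list, which may contain repetitions, to the set on which Lemma~\ref{lem:identity} and the greedy guarantee are stated. I would handle it by noting that repeated tours are adjacent in $G_\rho$ (since $d(\pi,\pi)=0<\rho$), so an independent set never uses a repeated tour twice. I would also check that the bounded-difference argument in Proposition~\ref{prop:carowei} is applied to the list version of the graph, where replacing one sample changes $\mathrm{ind}$ by at most one.
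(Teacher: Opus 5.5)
Your proposal is correct and is essentially the paper's proof. Caro--Wei puts the expectation above $2k$, and the bounded-difference inequality absorbs a deviation of $k$ because $\ln(1/\delta)\le k/2$; you read this off the packaged $\sqrt{(M/2)\ln(1/\delta)}$ clause of Proposition~\ref{prop:carowei}, while the paper writes it as $e^{-2k^2/M}=e^{-k/2}\le\delta$, which is the same computation. Lemma~\ref{lem:identity} together with Theorem~\ref{thm:greedy} then finishes. Your handling of repeated tours fills a step the paper leaves implicit. You are also right that $\rho\le D^{*}_{4k}(\Omega_c)$ goes unused, and it is in fact implied by $p_\mu(\rho)\le 1/(4k)$ through Proposition~\ref{prop:motzkin}.
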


\begin{proof}
With $M = 4k$ and $p_\mu(\rho) \le 1/(4k)$ we have $(M-1)p_\mu(\rho) < 1$, so
Equation~\ref{eq:carowei} gives
$\mathbb{E}[\mathrm{ind}(G_\rho(\mathcal{P}_M))] > M/2 = 2k$. Replacing one of
the $M$ samples changes the independence number by at most one, so the
bounded-difference inequality gives
$\Pr[\mathrm{ind} \le \mathbb{E}[\mathrm{ind}] - k] \le e^{-2k^2/M} =
e^{-k/2} \le \delta$. On the complementary event
$\mathrm{ind}(G_\rho(\mathcal{P}_M)) \ge k$, so
$D^{*}_k(\mathcal{P}_M) \ge \rho$ by Lemma~\ref{lem:identity}, and
Theorem~\ref{thm:greedy} returns a set with $D(\Pi) \ge \rho/2$.
\end{proof}

Comparing the two guarantees isolates what the change of criterion buys. Theorem~\ref{thm:endtoend} certifies $D(\Pi) \ge \tfrac{1}{2} D^{*}_k(\Omega_c)$ but needs $\Delta(\mu)\,N_c \ln(k/\delta)$ samples, which is exponential in $n$ and, by Theorem~\ref{thm:lower}, cannot be reduced. Theorem~\ref{thm:endtoend-conflict} needs $4k$, linear in $k$ and independent of $N_c$ and of $\delta$, but certifies $D(\Pi) \ge \tfrac{1}{2} D^{*}_{4k}(\Omega_c)$: the benchmark weakens from the best $k$ tours to the best $4k$, since the pool must now be shown to hold a surplus of separated tours rather than one specific set. The price of a polynomial sample size is therefore paid in the benchmark, not in the approximation factor, which stays at two in both.

%%%% mention finding rho close to upperbound is NP-hard
Proposition~\ref{prop:carowei} gives that the quality of a sampler is bounded by $p_\mu(\rho)$. One interesting fact is that finding the minimum value of $p_\mu(\rho)$ is actually NP-hard. See the following.

\begin{proposition}
\label{prop:motzkin}
$\displaystyle \min_{\mu} p_\mu(\rho) = \frac{1}{\mathrm{ind}(G_\rho(\Omega_c))}$,
is attained at the uniform distribution on any maximum independent set of $G_\rho(\Omega_c)$.
\end{proposition}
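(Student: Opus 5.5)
The plan is to write the collision rate as a quadratic form and prove the two inequalities separately. The upper bound comes from the explicit distribution named in the statement. The lower bound comes from the sampling bound of Proposition~\ref{prop:carowei} combined with the monotonicity in Equation~\ref{eq:ind-monotone}.

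\textbf{Quadratic form.} Let $A$ be the adjacency matrix of $G_\rho(\Omega_c)$ and $B = I + A$. Because $d$ is symmetric and $d(\pi,\pi)=0<\rho$, Equation~\ref{eq:collision} reads
\[
  p_\mu(\rho) = \sum_{\pi,\pi'} \mu(\pi)\mu(\pi')\,\mathbf{1}[d(\pi,\pi')<\rho] = \mu^{\top} B \mu .
\]
Write $\alpha_\rho = \mathrm{ind}(G_\rho(\Omega_c))$.

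\textbf{Attainment.} Let $S$ be a maximum independent set, so $|S| = \alpha_\rho$, and let $\mu$ be uniform on $S$. Distinct members of $S$ are at distance at least $\rho$, so the only colliding pairs are the diagonal ones $X=Y$. Hence
\[
  p_\mu(\rho) = \sum_{\pi\in S}\mu(\pi)^2 = 1/\alpha_\rho ,
\]
and therefore $\min_\mu p_\mu(\rho) \le 1/\alpha_\rho$.

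\textbf{Lower bound.} Fix any $\mu$ on $\Omega_c$. First, $p_\mu(\rho) \ge \sum_\pi \mu(\pi)^2 > 0$, so the right-hand side of Equation~\ref{eq:carowei} is finite. For every $M$, the pool $\mathcal{P}_M$ is a list drawn from $\Omega_c$. An independent set of $G_\rho(\mathcal{P}_M)$ carries pairwise distinct tours at mutual distance at least $\rho$, so it is an independent set of $G_\rho(\Omega_c)$. This is Equation~\ref{eq:ind-monotone}, and it holds for lists with repetitions as well as sets. Combining it with Proposition~\ref{prop:carowei} gives, for every $M$,
\[
  \alpha_\rho \;\ge\; \mathbb{E}\bigl[\mathrm{ind}(G_\rho(\mathcal{P}_M))\bigr] \;\ge\; \frac{M}{1+(M-1)p_\mu(\rho)} .
\]
Letting $M\to\infty$, the right-hand side tends to $1/p_\mu(\rho)$, so $p_\mu(\rho)\ge 1/\alpha_\rho$. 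Together with the attainment step, this proves the equality and shows the minimum is attained.

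\textbf{Main obstacle.} The only delicate point is that the limiting argument legitimately uses the same $\mu$ for every $M$ and that the deterministic bound $\alpha_\rho$ dominates the expectation uniformly in $M$. Both hold, since the bound above is pointwise in the sample. If a self-contained argument is preferred, the fallback is the Motzkin--Straus theorem applied to the complement graph $\bar G$ with adjacency matrix $\bar A$. Since $\mathbf{1}\mathbf{1}^{\top} = I + A + \bar A$, we have $\mu^{\top}B\mu = 1 - \mu^{\top}\bar A\mu$. Motzkin--Straus gives $\max_\mu \mu^{\top}\bar A \mu = 1 - 1/\omega(\bar G)$, and $\omega(\bar G) = \alpha_\rho$, which yields the same bound.
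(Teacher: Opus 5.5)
Your proof is correct, but for the nontrivial inequality it takes a different route from the paper. The paper writes $p_\mu(\rho)=\mu^{\top}(A_\rho+I)\mu$ and cites the Motzkin--Straus theorem in its independence-number form, which gives both the value and the attainment in one step. Your fallback is exactly that argument, with the complement identity $\mu^{\top}(I+A)\mu=1-\mu^{\top}\bar{A}\mu$ made explicit.

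Your main argument instead derives $p_\mu(\rho)\ge 1/\mathrm{ind}(G_\rho(\Omega_c))$ from the paper's own Proposition~\ref{prop:carowei}. The independence number of the sampled conflict graph is capped by $\mathrm{ind}(G_\rho(\Omega_c))$, pointwise in the sample and uniformly in $M$. Proposition~\ref{prop:carowei} bounds its expectation below by a quantity tending to $1/p_\mu(\rho)$, so the cap must dominate that limit. This is in effect the Caro--Wei (Tur\'an-type) proof of Motzkin--Straus, run on a random clique blow-up of $G_\rho(\Omega_c)$. It is not circular, since Proposition~\ref{prop:carowei} is proved without using the present statement.

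The citation is shorter and rests on a classical theorem. Your version is self-contained given earlier results and checks attainment directly rather than leaving it inside the citation. It also shows along the way that the saturation ceiling $1/p_\mu(\rho)$ of Proposition~\ref{prop:carowei} can never exceed the true independence number, which ties the two propositions of Section~\ref{sec:conflict} together more tightly than the paper does.
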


\begin{proof}
Writing $A_\rho$ for the adjacency matrix of $G_\rho(\Omega_c)$ and treating
$\mu$ as a vector on the simplex, Equation~\ref{eq:collision} reads
$p_\mu(\rho) = \mu^{\!\top}(A_\rho + I)\mu$, the identity accounting for the
event $X = Y$. The claim is the Motzkin--Straus theorem in its form for the
independence number \citep{motzkin1965maxima}.
\end{proof}

Propsition~\ref{prop:motzkin} shows that given a value of $\rho$, if we can have the optimal sampler with minimized $p_\mu(\rho)$, we can directly have $\mathrm{ind}(G_\rho(\Omega_c))$, which is the answer of the inverted version of D-TSP. Thus, the actual challenge of learning a good sampler is to address how to close to the bound of $\rho$ and minimizing $p_\mu(\rho)$ simultaneously, which motivates the learning objective described in the next subsection.

%Lastly, to see how much benefit we can gain compared with the analysis in Proposition~\ref{prop:minimax}, we can rearrange Equation~\ref{eq:carowei} which gives the sample size directly: if
%$p_\mu(\rho) < 1/k$, then
%\begin{equation}
  %M \;\ge\; \frac{k\bigl(1 - p_\mu(\rho)\bigr)}{1 - k\,p_\mu(\rho)}
  %\label{eq:sample-conflict}
%\end{equation}
%suffices for $\mathbb{E}[\mathrm{ind}(G_\rho(\mathcal{P}_M))] \ge k$, and the concentration statement of Proposition~\ref{prop:carowei} turns this into a high-probability guarantee at a given $M$, at the cost of an additive $\sqrt{(M/2)\ln(1/\delta)}$. The requirement is polynomial in $k$, in contrast with the $\Delta \cdot N_c \ln(k/\delta)$ of Equation~\ref{eq:sample-size}, which is exponential in $n$. This advantage holds only if $\mathrm{ind}\bigl(G_\rho(\Omega_c)\bigr) > k$.

\subsection{The learning objective}
\label{subsec:objective}

Proposition~\ref{prop:carowei} tells the sampler to minimize $p_\mu(\rho)$, and
$\mu$ is the policy conditioned on $\Omega_c$, so the hinge of
Equation~\ref{eq:hinge} and the collision rate act on different things: the
hinge decides which tours are admissible and is flat once inside, while the
collision rate shapes the distribution within. The two still pull against each
other, since spreading mass reduces collisions most cheaply by leaving the
feasible region, and $\alpha$ sets the exchange rate as before.

Compared with the analysis in Section~\ref{subsec:hit}, the criterion is not an addition to the entropy bonus but a replacement for it.
As $\rho \to 0$ no two distinct tours are within the threshold, by
Remark~\ref{rem:resolution}, and
\begin{equation}
  p_\mu(0^{+}) \;=\; \sum_{\pi \in \Omega_c} \mu(\pi)^2 ,
  \label{eq:collision-zero}
\end{equation}
whose minimizer over the simplex is the uniform distribution on $\Omega_c$. The
two sections therefore agree at $\rho = 0$ on which sampler is best, though they
reach it through different functionals.

Fortunately, the variable~$\rho$ need not be tuned. Writing
$g(\eta) = \eta - k\,\mathbb{E}[(\eta - d(X,Y))^{+}]$ for $X,Y \sim \mu_\theta$
independent, we have $g'(\eta) = 1 - k\,p_\mu(\eta)$, so $g$ is concave and its
maximum is attained where $p_\mu(\eta) = 1/k$. This is the
Rockafellar--Uryasev representation of the conditional value at risk
\citep{rockafellar2000optimization}: at the optimum, $g(\eta)$ is the mean of
the smallest $1/k$ fraction of pairwise distances, and raising it is what the
sampler should do.

Collecting the pieces, the sampler is trained by
\begin{equation}
  \max_{\theta} \; \max_{\eta} \;\;
  \mathbb{E}_{\pi \sim p_\theta}\bigl[ R(\pi) \bigr]
  \;+\; \alpha \Bigl\{\, \eta - k\,\mathbb{E}_{X,Y \sim \mu_\theta}
  \bigl[ (\eta - d(X,Y))^{+} \bigr] \Bigr\} ,
  \label{eq:final-objective}
\end{equation}
with $R$ the hinge of Equation~\ref{eq:hinge}. The two expectations are taken
under different laws: the first under the policy itself, since the hinge is zero
inside $\Omega_c$ and carries gradient only on infeasible samples, and the
second under the policy conditioned on $\Omega_c$, which in a batch means the
samples that pass the filter. Equation~\ref{eq:final-objective} replaces the
entropy bonus of Equation~\ref{eq:obj} rather than adding to it, and $\alpha$
plays the same role as before. The duplicate penalty of
Section~\ref{sec:model} is likewise subsumed, being the $\rho \to 0$ case of the
second term. Both maximizations can be carried out by gradient ascent, the
inner one because $g$ is concave and $(\cdot)^{+}$ is differentiable almost
everywhere. The threshold is thus read off rather than set, with one caveat:
$\eta$ tracks the $1/k$ quantile of the distances the current policy produces,
which is an estimate of what is presently attainable and not of
$D^{*}_k(\Omega_c)$. The objective pushes it upwards. Notice that the bound of Equation~\ref{eq:carowei} is monotone in $p_\mu(\rho)$, so any decrease in the collision rate is a certified improvement
of the guarantee, whether or not the true independence number of the sampled graph rises with it.

\section{Experiments}
\label{sec:Results}
 
The experiments answer two separate questions. The first is whether the analysis of Sections~\ref{sec:method} and~\ref{sec:conflict} describes what actually happens: whether the coverage requirement is as demanding as
Section~\ref{sec:method} claims, whether the collision rate really governs the
dispersion a pool can support, and whether the sample size saturates where
Proposition~\ref{prop:carowei} says it should. The second is how the resulting
sampler compares with existing methods, on both in-distribution and
out-of-distribution instances.
 
\subsection{Setup}
\label{subsec:setup}
 
\textbf{Datasets.} We evaluate on two groups of instances. The first consists
of randomly generated graphs with 40 cities, matching the scale used during
training. The second consists of all TSPLIB~\citep{reinelt1991tsplib} instances
with 51 to 101 cities; the sampler is trained only on instances from the first
group and is applied to these without adaptation. Consistent with our training
protocol, the coordinates of all cities across all evaluation datasets are
independently normalized to $[0,1]^2$ prior to inference to prevent spatial
distortion.
 
Two disjoint sets of 20 random 40-city instances are used: one for the
validation experiments of Section~\ref{subsec:validation} and one for the
comparison of Section~\ref{subsec:headtohead}. The latter is the set on which
the baseline methods were originally run, so that their numbers can be reused
without modification.
 
\textbf{Competitors.} We benchmark our framework against representative
baselines from both the traditional and the neural categories:
\begin{itemize}
  \item \textbf{Traditional heuristics.} We include the Niching Memetic
  Algorithm (NMA)~\citep{huang2019niching}, widely regarded as the
  state-of-the-art heuristic for the Multi-Solution TSP. We also compare
  against the Bi-Criteria Heuristic~\citep{gao2022obtaining}. To allow a
  direct comparison, we adapt its output---diverse spanning trees---into TSP
  tours using the same traversal employed in our own tree-to-tour construction.
  \item \textbf{Neural baseline.} We evaluate against
  RF-MA3S~\citep{li2025diversity}, a DRL-based solver published in KDD'25 and
  designed specifically for diversity optimization. It employs an
  encoder--decoder architecture with adaptive active search to balance solution
  quality and diversity.
\end{itemize}
 
\textbf{Implementation.} All experiments are conducted on a machine equipped
with an NVIDIA RTX 4090 GPU and an Intel Core i7-13700K CPU. The traditional
heuristics are executed on the same hardware to ensure a fair comparison.
 
\textbf{Our setting.} A single sampler~\footnote{The Method 1 and Method 2 performs similar qualities among all the instances. Here we only report Method 1 to avoid redundancy.} is trained once and used for both
experiments, on 40 cities drawn uniformly from $[0,1]^2$, with the hinge reward
of Equation~\ref{eq:hinge} at $c = 2.0$. Training runs for 100 epochs with
1{,}000 steps per epoch and a batch size of 256, optimized with Adam at a
constant learning rate of $5 \times 10^{-4}$; validation at the end of each
epoch monitors convergence. Training takes approximately 12.7 hours. The
entropy coefficient is $\alpha = 1.3 \times 10^{-3}$ in the summed form of
Equation~\ref{eq:entropy-term}. The same checkpoint is used for every
experiment in this section, with no fine-tuning of any kind.
 
\textbf{Baseline settings.} To ensure a fair comparison, we align the
hyperparameters of the baseline methods with the recommendations in their
respective original papers.
\begin{itemize}
  \item For the Bi-Criteria Heuristic we adopt the methodology
  of~\citet{gao2022obtaining}, which formulates the search for diverse spanning
  trees as a constrained minimum spanning tree problem, where finding a
  distinct tree is treated as minimizing edge usage frequency subject to a cost
  budget. We implement their Lagrangian relaxation-based approximation and
  generate 20 trees for each relaxation factor in $[1.1, 3.0]$, converting each
  tree into a tour by the same traversal used for our own.
  \item For the evolutionary baseline, NMA~\citep{huang2019niching}, we
  execute the algorithm with 8 distinct population sizes
  $N_{pop} \in \{150, 200, \dots, 500\}$, running it repeatedly at each setting
  to accumulate 125 solutions. We use a mutation rate $P_m = 0.1$, a stagnation
  threshold of 60 for critical edge updates, a niche radius refinement
  parameter of 2, and an adaptive neighbourhood constrained to
  $M_{\min} = 4$ and $M_{\max} = 6$.
  \item For the neural baseline, RF-MA3S~\citep{li2025diversity}, we train the
  model on instances with 40 cities, strictly following the hyperparameter
  settings recommended in their original paper. Training takes approximately 11
  hours.
\end{itemize}
 
\textbf{Inference and selection.} Unlike methods that rely on computationally
expensive active search at inference time~\citep{li2025diversity}, our approach
uses the stochasticity of the trained policy directly. We draw a pool of
1{,}000 tours, discard those whose cost exceeds $c\,\widehat{C}$, and apply
Equation~\ref{eq:greedy} to the remainder. Because sampling is stochastic even
for a fixed checkpoint, every reported number is the median over three
independent seeds, with the seed range given where it is non-zero. If fewer
than $k$ feasible tours survive the filter, we record the setting as not
achieved rather than resampling.
 
\textbf{Budget reference.} The budget reference $\widehat{C}$ is obtained from
Google OR-Tools on the normalized coordinates, for both the random and the
TSPLIB instances. On the six TSPLIB instances whose optimum is published, the
ratio $\widehat{C}/C^{*}$ lies between $1.0002$ and $1.0169$, so the budget is
in effect at most $1.7\%$ looser than the nominal $c$. By
Remark~\ref{rem:chat} this does not affect any statement in the analysis.
 
\textbf{Metrics.} We report $D(\Pi)$, the smallest pairwise Jaccard distance
inside the selected set, which is the quantity Definition~\ref{def:dtsp}
maximizes and the scale on which Theorem~\ref{thm:greedy} is multiplicative;
higher is better. The discussion occasionally reports the largest pairwise
Jaccard \emph{similarity} $J(\Pi) = 1 - D(\Pi)$ where it aids comparison, but
note that the same result looks very different on the two scales, since the
guarantee is multiplicative in $d$ and only additive in $J$. All evaluations
use budget factors $c \in \{2.0, 2.5, 3.0, 4.0\}$ with
$k \in \{80, 120, 160, 320\}$ respectively, $k$ growing as the feasible region
$\Omega_c$ expands.
 
\subsection{Validating the theoretical analysis}
\label{subsec:validation}
 
\begin{figure}
    \centering
    \includegraphics[width=0.9\linewidth]{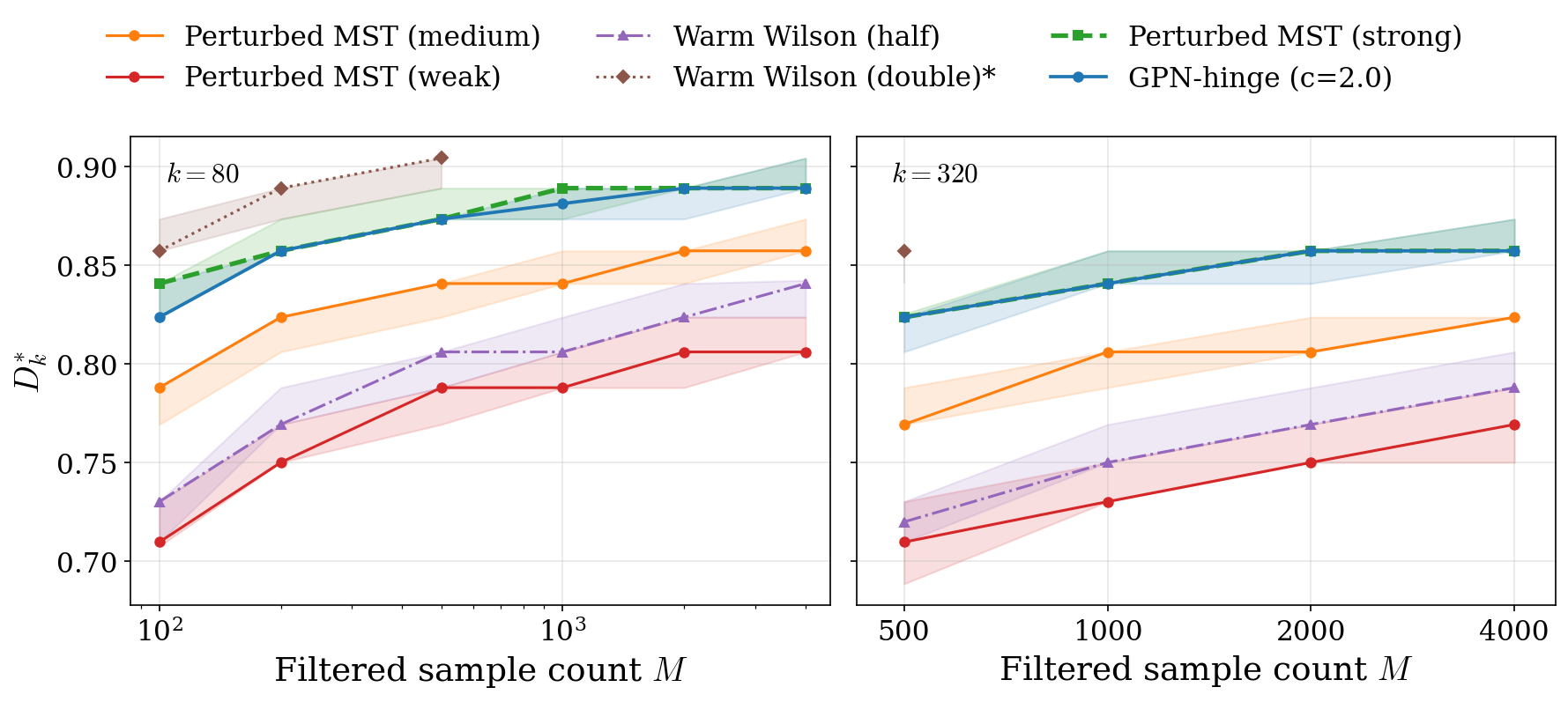}
    \caption{$D^{*}_k(S)$ versus the number of samples, across the samplers
    considered.}
    \label{fig:e3eff}
\end{figure}
 
Recall that Section~\ref{sec:method} derives a quality guarantee from an
$\varepsilon$-cover argument. These experiments ask whether the sample size it
demands is actually necessary in a common case, namely instances drawn from
uniformly generated graphs. Ideally, far fewer samples should be needed, and
the number should be inversely proportional to the collision rate
$\hat{p}_\mu$ (Equation~\ref{eq:collision}), as long as the sampler is close to
the sampling distribution that D-TSP calls for.
 
To examine this, we use seven pool-generation procedures: the learned sampler
(GPN-hinge, our method), perturbed minimum spanning trees at three perturbation
strengths,\footnote{The perturbation strength is set from the edge weight
distribution: weak (10th quantile), medium (25th quantile) and strong (50th
quantile).} and three warm-started weighted Wilson
samplers~\citep{wilson1996generating}---calibrated, half and
double.\footnote{Weighted Wilson sampling runs Wilson's loop-erased random walk
with transition probabilities proportional to $\exp(-w_{uv}/\alpha)$, which
draws a spanning tree $T$ with probability proportional to
$\exp(-w(T)/\alpha)$. The temperature $\alpha$ is set per instance by bisection
so that the mean normalized tour cost matches that of the learned sampler; the
two uncalibrated variants use $0.5\alpha$ and $2\alpha$.} Their role is to
provide variation across sampling methods rather than to compete with one
another. All results use the 20 random 40-city instances reserved for this
purpose, with $M \approx 4{,}000$ and at most 5{,}000 raw draws.
 
\begin{table}[t]
\caption{Sampler behaviour at $k=320$ under a budget of roughly 5{,}000 raw
draws. Acceptance and raw draws are medians over the 20 instances; the sample
count $M$ refers to tours retained after the budget filter.}
\label{tab:sat}
\centering
\begin{threeparttable}
\begin{tabular}{lrrrr}
\toprule
Sampler & Acceptance & $M$ & $D^{*}_{320}$ & Raw draws (median) \\
\midrule
Learned sampler         & 0.993 & 2000             & $0.8571^{\dagger}$ & 2014 \\
Perturbed MST (strong)  & 0.986 & 2000             & $0.8571^{\dagger}$ & 2027 \\
Perturbed MST (medium)  & 1.000 & 4000             & 0.8235             & 4000 \\
Perturbed MST (weak)    & 1.000 & 4000             & 0.7692             & 4000 \\
\midrule
Warm Wilson (double)    & 0.154 & $500^{\ddagger}$ & 0.8571             & 3217 \\
\bottomrule
\end{tabular}
\begin{tablenotes}\footnotesize
\item[$\dagger$] The value is unchanged beyond 2{,}000 samples.
\item[$\ddagger$] Warm Wilson (double) is reported at $M=500$. Beyond that
point the number of instances with a pool of the required size drops to 11,
and falls further as $M$ grows.
\end{tablenotes}
\end{threeparttable}
\end{table}
 
Figure~\ref{fig:e3eff} plots the pairwise Jaccard distance at $k = 80$ and
$k = 320$ against the number of samples; the calibrated Wilson sampler traces
exactly the same curve as GPN-hinge, so only one of the two is shown. Three
things are visible. First, an $\varepsilon$-cover is not necessary at the scale
of these instances: 250 samples already suffice for Equation~\ref{eq:greedy} to
reach $90\%$ of the best dispersion the full pool supports, for every sampler
reported. Second, among all the samplers, warm-started Wilson (double), the
grey curve, attains the highest Jaccard distance at $M = 500$. Its acceptance
rate is low, however, so it needs more raw draws than the others across the 20
instances, as Table~\ref{tab:sat} shows. This suggests that the hinge reward is
indeed what keeps a sampler inside the intended range of tour cost. Finally,
the number of samples eventually saturates the achievable Jaccard distance. At
$k = 80$ every curve flattens, most of them between five hundred and two
thousand samples, after which additional samples buy nothing. This is the
ceiling $1/p_\mu(\rho)$ implied by Proposition~\ref{prop:carowei}: beyond it
the limitation is the distribution, not the number of draws.
 
\begin{figure}
    \centering
    \includegraphics[width=0.7\linewidth]{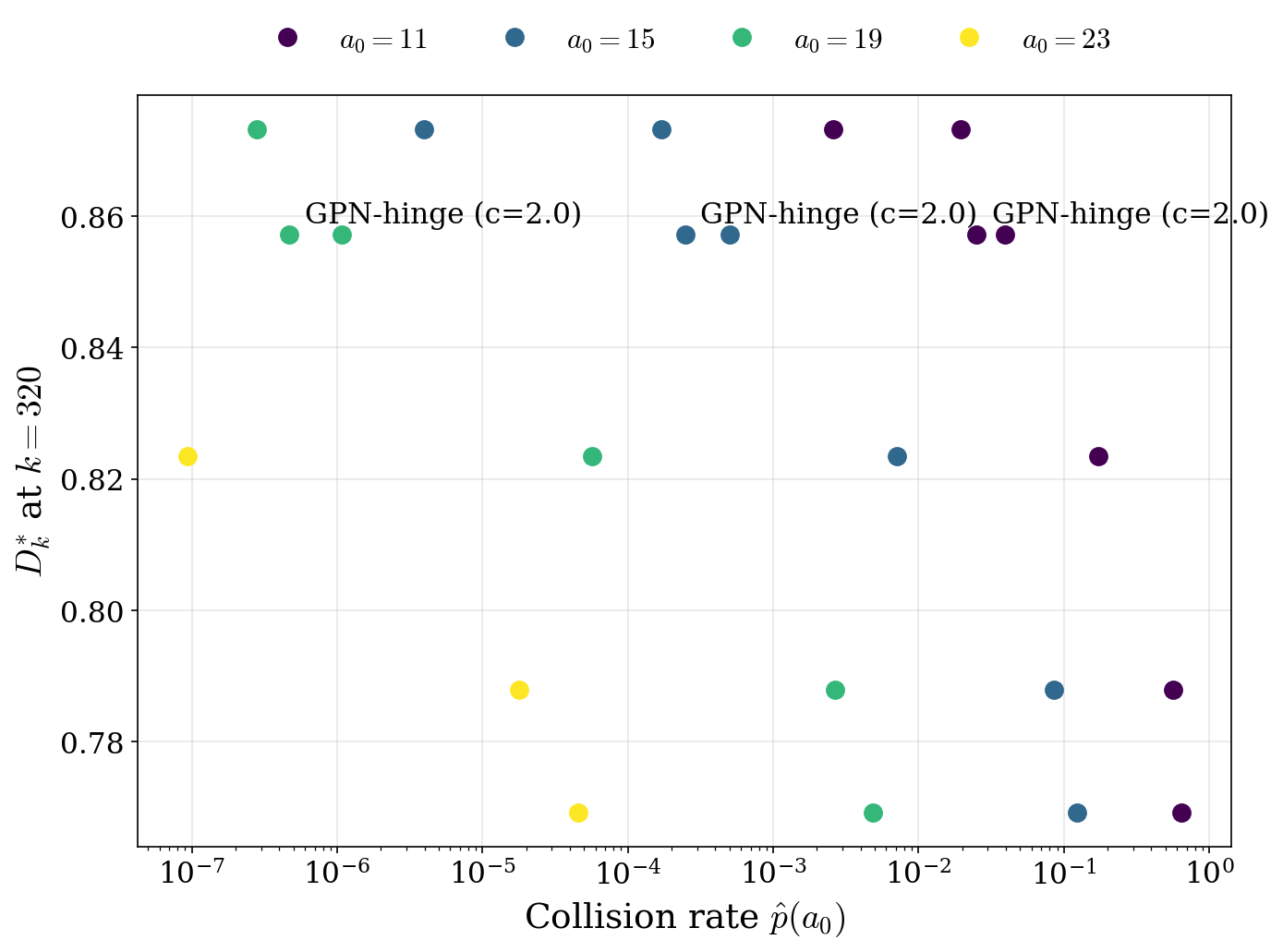}
    \caption{Collision rate $\hat{p}(a_0)$ versus Jaccard distance. Cases with
    $\hat{p}(a_0) = 0$ are omitted.}
    \label{fig:e2scatter}
\end{figure}
 
Figure~\ref{fig:e2scatter} shows the collision rate of each sampler at several
thresholds. Within every threshold the relation is monotone---lower collision
rate, higher dispersion---with no exception among the seven samplers. The rank
correlation is around $-0.99$ ($n = 6$, $p < 0.001$) and stays there across the
whole range of thresholds at which all samplers have a positive collision rate.
Proposition~\ref{prop:carowei} therefore supplies not only a performance
guarantee but also a usable indicator, since lowering $p_\mu(\rho)$ does help
to achieve a higher Jaccard distance. 

\subsection{Comparison with existing methods}
\label{subsec:headtohead}

\begin{table}[ht]
\caption{Minimum pairwise Jaccard distance $D(\Pi)$ on 20 random 40-city instances; higher is better. Each column is a budget $c$ with the corresponding $k$. Entries are medians over the 20 instances, and for our sampler additionally over three seeds. The last row gives the number of instances on which our sampler strictly beats the Bi-Criteria Heuristic, the strongest baseline.}
\label{tab:rand40}
\centering
\begin{threeparttable}
\begin{tabular}{lrrrr}
\toprule
 & $c=2.0$ & $c=2.5$ & $c=3.0$ & $c=4.0$ \\
 & $k=80$ & $k=120$ & $k=160$ & $k=320$ \\
\midrule
Bi-Criteria & 0.8060 & 0.8410 & 0.8570 & 0.8240 \\
NMA & 0.7690 & 0.7500 & 0.7300 & 0.6210 \\
RF-MA3S & 0.0000 & 0.0000 & 0.0000 & 0.0000 \\
\midrule
GPN-hinge (c=2.0) & \textbf{0.8410} & \textbf{0.8570} & \textbf{0.8570} & \textbf{0.8410} \\
\midrule
Wins vs.\ Bi-Criteria & 20/20 & 20/20 & $19/20^{\dagger}$ & $20/20^{\ddagger}$ \\
\bottomrule
\end{tabular}
\begin{tablenotes}\footnotesize
\item[$\dagger$] 14 of them are ties.
\item[$\ddagger$] 2 of them are ties.
\end{tablenotes}
\end{threeparttable}
\end{table}

\paragraph{Random 40-city instances.} Table~\ref{tab:rand40} reports the comparison on the 20 random 40-city instances, the setting the sampler was trained for. At the training budget $c = 2.0$ the learned sampler attains a higher $D(\Pi)$ than all three baselines on every one of the 20 instances, and the same holds at $c = 2.5$. At $c = 4.0$ it wins on 18 of 20 against the Bi-Criteria Heuristic and on all 20 against the other two. The one setting where it does not separate is $c = 3.0$, where it ties with the Bi-Criteria Heuristic at the median and wins outright on 5 instances.

Two features of the baselines are worth recording. RF-MA3S returns
$D(\Pi) = 0$ at every budget, meaning the selected set contains a pair of identical tours; on the similarity scale this appears as a value of $1.0$, which understates how complete the collapse is. NMA degrades steadily as $k$ grows, from $0.769$ at $k = 80$ to $0.621$ at $k = 320$, consistent with a niching mechanism that cannot maintain separation once many solutions are requested.

\begin{table}[t]
\caption{Minimum pairwise Jaccard distance $D(\Pi)$ on TSPLIB instances at $c=4.0$ and $k=320$; higher is better. Our sampler is trained once on random 40-city instances at $c=2.0$ and applied zero-shot. Values are medians over three seeds; the seed range is $0.0000$ in every cell.}
\label{tab:tsplib-main}
\centering
\begin{tabular}{lrrrrr}
\toprule
Instance & $n$ & Bi-Criteria & NMA & RF-MA3S & GPN-hinge (c=2.0) \\
\midrule
\texttt{eil51} & 51 & 0.8539 & 0.6027 & 0.0000 & \textbf{0.8667} \\
\texttt{berlin52} & 52 & 0.8444 & 0.6667 & 0.0000 & \textbf{0.9053} \\
\texttt{st70} & 70 & 0.8333 & 0.6408 & 0.0000 & \textbf{0.8976} \\
\texttt{eil76} & 76 & 0.8571 & 0.6549 & 0.0000 & \textbf{0.9143} \\
\texttt{pr76} & 76 & 0.8308 & 0.6549 & 0.0000 & \textbf{0.9220} \\
\texttt{rat99} & 99 & 0.8488 & 0.6887 & 0.0000 & \textbf{0.9468} \\
\texttt{kroA100} & 100 & 0.8372 & 0.7013 & 0.0000 & \textbf{0.9362} \\
\texttt{kroB100} & 100 & 0.8439 & 0.7013 & 0.0000 & \textbf{0.9418} \\
\texttt{kroC100} & 100 & 0.8372 & 0.6842 & 0.0000 & \textbf{0.9362} \\
\texttt{kroD100} & 100 & 0.8372 & 0.6928 & 0.0000 & \textbf{0.9418} \\
\texttt{kroE100} & 100 & 0.8439 & 0.6842 & 0.0392 & \textbf{0.9418} \\
\texttt{rd100} & 100 & 0.8439 & 0.6928 & 0.0000 & \textbf{0.9247} \\
\texttt{eil101} & 101 & 0.8652 & 0.7051 & 0.0000 & \textbf{0.9312} \\
\midrule
Median & & 0.8439 & 0.6842 & 0.0000 & \textbf{0.9312} \\
\bottomrule
\end{tabular}
\end{table}

\begin{table}[t]
\caption{Summary across budgets. Medians are taken over the subset of instances on which our sampler returns $k$ feasible tours; the last column gives wins against Bi-Criteria, NMA and RF-MA3S on that subset.}
\label{tab:tsplib-budgets}
\centering
\begin{tabular}{rrrrrrrl}
\toprule
$c$ & $k$ & Achieved & GPN-hinge (c=2.0) & Bi-Criteria & NMA & RF-MA3S & Wins \\
\midrule
2.0 & 80 & 1/13 & \textbf{0.8667} & 0.8276 & 0.8000 & 0.0000 & 1/1, 1/1, 1/1 \\
2.5 & 120 & 5/13 & \textbf{0.9053} & 0.8571 & 0.7742 & 0.0000 & 5/5, 5/5, 5/5 \\
3.0 & 160 & 7/13 & \textbf{0.9220} & 0.8696 & 0.7541 & 0.0000 & 6/7, 7/7, 7/7 \\
4.0 & 320 & 13/13 & \textbf{0.9312} & 0.8439 & 0.6842 & 0.0000 & 13/13, 13/13, 13/13 \\
\bottomrule
\end{tabular}
\end{table}

\begin{table}[t]
\caption{Acceptance rate of the sampler, i.e.\ the fraction of the 1{,}000 sampled tours meeting the budget (median over three seeds). The sampler is trained at $n=40$ and applied without fine-tuning.}
\label{tab:tsplib-accept}
\centering
\begin{tabular}{lrrrrr}
\toprule
Instance & $n$ & $c=2.0$ & $c=2.5$ & $c=3.0$ & $c=4.0$ \\
\midrule
\texttt{eil51} & 51 & 0.352 & 0.997 & 1.000 & 1.000 \\
\texttt{berlin52} & 52 & 0.002 & 0.423 & 0.990 & 1.000 \\
\texttt{st70} & 70 & 0.000 & 0.489 & 0.995 & 1.000 \\
\texttt{eil76} & 76 & 0.001 & 0.687 & 1.000 & 1.000 \\
\texttt{pr76} & 76 & 0.000 & 0.001 & 0.425 & 1.000 \\
\texttt{rat99} & 99 & 0.000 & 0.000 & 0.004 & 0.986 \\
\texttt{kroA100} & 100 & 0.000 & 0.000 & 0.000 & 0.706 \\
\texttt{kroB100} & 100 & 0.000 & 0.000 & 0.000 & 0.909 \\
\texttt{kroC100} & 100 & 0.000 & 0.000 & 0.000 & 0.732 \\
\texttt{kroD100} & 100 & 0.000 & 0.000 & 0.000 & 0.842 \\
\texttt{kroE100} & 100 & 0.000 & 0.000 & 0.000 & 0.877 \\
\texttt{rd100} & 100 & 0.000 & 0.004 & 0.632 & 1.000 \\
\texttt{eil101} & 101 & 0.000 & 0.232 & 0.994 & 1.000 \\
\bottomrule
\end{tabular}
\end{table}

\paragraph{TSPLIB instances.} Table~\ref{tab:tsplib-main} reports the same comparison on the 13 TSPLIB instances with 51 to 101 cities, using the identical checkpoint with no fine-tuning. At $c = 4.0$ the learned sampler attains a higher $D(\Pi)$ than all three baselines on all 13 instances, raising the median from $0.844$ for the Bi-Criteria Heuristic to $0.931$. On the similarity scale this is a reduction of the maximum pairwise Jaccard index from $0.156$ to $0.069$, a factor of $2.17$. RF-MA3S again returns $D(\Pi) = 0$ on 12 of the 13 instances.

Table~\ref{tab:tsplib-budgets} summarizes the four budgets. Of the 26 comparisons in which the sampler returns $k$ feasible tours, it wins 25 and ties once, against the Bi-Criteria Heuristic on \texttt{eil51} at $c = 3.0$. At the tighter budgets, however, it does not always return $k$ feasible tours: $1/13$ instances at $c = 2.0$, rising to $13/13$ at $c = 4.0$.

Table~\ref{tab:tsplib-accept} shows why. The acceptance rate falls
monotonically with instance size, from $0.35$ at $n = 51$ to zero for $n \ge 70$ at the tightest budget. The sampler was trained at $n = 40$ and applied without adaptation, and the cost of the tours it produces degrades with scale, so larger instances need a wider budget before $k$ feasible tours survive the filter or retrain at the corresponding scale. We leave the extension to large instances to future work.

\section{Conclusion}
\label{sec:conclusion}

We have studied D-TSP through a sample-then-select framework, and located its difficulty in the sampling stage. We show that the optimal sampling distribution is learnable via a closed form function and provide an analysis for end-to-end guarantee. To decrease the difficulty of learning the diverse tours directly, we provide a pipeline to learn the diverse spanning trees and convert them to tours, which costs nothing according to our analysis. Lastly, we provide another end-to-end guarantee to ensure the sampling pool merely hold some $k$ tours lying far apart. Turns an exponential sample size into a linear one, at the cost of a weaker benchmark. Experiments show that the resulting sampler, trained once and applied without fine-tuning, beats existing methods both in distribution and on TSPLIB, provided the evaluation budget matches the one it was trained for.

%\THEEndNotes
% \begingroup \parindent 0pt \parskip 0.0ex \def\enotesize{\normalsize} \theendnotes \endgroup

% Appendix here

%% ------------------------------------------------------------------
%% DOUBLE-ANONYMOUS REVIEW: TELO requires that all acknowledgments be
%% removed from the submitted version. The funding statement below names
%% the grants and therefore identifies the institution -- keep it
%% commented out until the camera-ready.
%%
%% \begin{acks}
%% This research was supported by the National Science and Technology
%% Council, Taiwan, under grant numbers NSTC 114-2221-E-008-049 and
%% NSTC 111-2222-E-008-008-MY2.
%% \end{acks}
%% ------------------------------------------------------------------

\bibliographystyle{ACM-Reference-Format}
\bibliography{reference}

% =====================================================================
% Diverse TSP -- Appendix
%
%   \input{appendix}
% to main.tex, AFTER the bibliography (ACM convention).
%
% Notation note: the ring radius of the lower-bound instance is rho,
% because R denotes the reward. eta is the per-cell cost gap and is used
% nowhere else in the paper.
% =====================================================================

\appendix

\section{Uniform sampling of tours}
\label{app:rejection}

We prove the claim made in Section~\ref{subsec:hit}, that rejection sampling
cannot produce the uniform sampler on $\Omega_c$. Let $V$ consist of $n$ points
drawn i.i.d.\ uniformly from $[0,1]^2$ and let $\pi$ be drawn uniformly from
$\mathcal{T}_n$, independently of $V$.

Conditioned on $V$, the tour $\pi$ visits the points in a uniformly random
order, so each of its $n$ edges is a uniformly random pair and
$\mathbb{E}[C(\pi) \mid V] = n\,\mathbb{E}\|U - U'\| \approx 0.5214\,n$, where
$U$ and $U'$ are independent uniform points. The Beardwood--Halton--Hammersley
theorem gives $C^{*} \sim \beta \sqrt{n}$ almost surely with
$\beta \approx 0.7124$ \citep{applegate2006traveling}. On that almost sure event
the ratio $\mathbb{E}[C(\pi) \mid V] / C^{*}$ grows like $0.73 \sqrt{n}$, so for
any fixed $c$ there is an $n_0(c)$ beyond which
$\mathbb{E}[C(\pi) \mid V] \ge 2c\,C^{*}$.

It remains to show that $C(\pi)$ concentrates. Its randomness is that of a
uniform permutation, whose coordinates are not independent, so the usual
bounded-difference inequality does not apply directly; we use its version for
uniform permutations. Transposing two positions of the permutation changes at
most four edges of the tour, and every edge has length at most $\sqrt 2$, so the
transposition changes $C(\pi)$ by at most $4\sqrt 2$. The permutation version of
the bounded-difference inequality therefore gives, for a universal constant
$\kappa$,
\[
  \Pr\bigl[\, C(\pi) \le \mathbb{E}[C(\pi) \mid V] - t \,\bigm|\, V \,\bigr]
  \;\le\; \exp\bigl(-\kappa t^{2}/n\bigr) .
\]
Taking $t = \mathbb{E}[C(\pi) \mid V] - c\,C^{*} = \Theta(n)$ yields
$\Pr[\pi \in \Omega_c \mid V] \le e^{-\Theta(n)}$ for all $n \ge n_0(c)$, on an
event of $V$ of probability tending to one.

\section{The lower-bound instance}
\label{app:gadget}

We give the coordinates behind Figure~\ref{fig:gadget} and prove
Theorem~\ref{thm:lower}.

\paragraph{Construction.}
Fix $g \ge 3$. For $j = 1,\dots,g$ let $\phi_j = 2\pi j / g$, let $\hat u_j$ and
$\hat v_j$ be the outward radial and the tangential unit vectors at angle
$\phi_j$, and let $P_j = \rho\,\hat u_j$. Given $0 < H < L \ll \rho/g$ and an
offset $\vartheta \in (0,H)$, the instance consists of the $n = 4g$ points
\[
  p^1_j = P_j - L\hat v_j, \quad
  p^2_j = P_j, \quad
  p^3_j = P_j + L\hat v_j, \quad
  z_j = P_j - \vartheta\,\hat v_j + H\hat u_j ,
\]
with consecutive cells joined by the edge $\{p^3_j, p^1_{j+1}\}$. A tour of the
shape of Figure~\ref{fig:gadget} therefore uses three edges inside each cell and
$g$ connecting edges, which is $4g = n$ in total, as it must be by
Equation~\ref{eq:equal-cardinality}. The two traversals share only
$\{z_j, p^2_j\}$, and subtracting their costs gives Equation~\ref{eq:eta}.

\paragraph{Structure.}
For $\vartheta$ small enough, every tour of cost at most $C^{*} + g\eta$ visits
the cells in ring order and traverses each cell by one of the two ways of
Figure~\ref{fig:gadget}. Indeed, tour cost is $2$-Lipschitz in $\vartheta$,
while at $\vartheta = 0$ all such structured tours have equal cost and every
unstructured tour is more expensive by at least some $\varsigma > 0$ depending
only on $(\rho, L, H, g)$; taking $\vartheta$ with $2\vartheta < \varsigma$
gives the claim. Writing $\mathbf{b} \in \{0,1\}^g$ for the vector of per-cell
choices, with $b_j = 1$ meaning the expensive traversal, $|\mathbf b|$ for its
number of ones, and $h$ for the number of coordinates in which two such vectors
differ, we obtain
\begin{equation}
  C(\pi_{\mathbf b}) = C^{*} + |\mathbf b|\,\eta,
  \qquad
  a(\pi_{\mathbf b}, \pi_{\mathbf b'}) = n - 2h ,
  \label{eq:gadget-laws}
\end{equation}
since the two traversals of a cell share one of their three internal edges and
the connecting edges are common to all such tours. We verified both laws
exhaustively for $(\rho, L, H) = (20, 1, 0.5)$ and $g = 3$, enumerating all
tours on the $n = 12$ points for
$\vartheta \in \{2\times10^{-2}, 5\times10^{-3}, 10^{-3}\}$; the margin
$\varsigma/\eta$ enforcing the structure grew from $32$ to $632$ across that
range.

\paragraph{Proof of Theorem~\ref{thm:lower}.}
Choose $c$ with $(c-1)C^{*} = \eta$. By Equation~\ref{eq:gadget-laws} the
feasible region is $\Omega_c = \{\pi_{\mathbf 0}\} \cup
\{\pi_{\mathbf e_j}\}_{j \le g}$, so $N_c = g+1$, the tour $\pi_{\mathbf 0}$
costs $C^{*}$ and the remaining $g$ cost exactly $c\,C^{*}$.

Equations~\ref{eq:gadget-laws} and~\ref{eq:J-in-a} then give
$d(\pi_{\mathbf e_i}, \pi_{\mathbf e_j}) = 2/(g+1)$ for $i \ne j$ and
$d(\pi_{\mathbf 0}, \pi_{\mathbf e_j}) = 2/(2g+1)$, using $n = 4g$. With
$k = g$ the only set avoiding $\pi_{\mathbf 0}$ is the set of all $g$ expensive
tours, which attains $D^{*}_g(\Omega_c) = 2/(g+1)$; any other choice of $g$
feasible tours contains $\pi_{\mathbf 0}$ and has $D(\Pi) = 2/(2g+1)$, that is,
$\tfrac{g+1}{2g+1} D^{*}_g(\Omega_c)$.

The first clause of the theorem is now Lemma~\ref{lem:sample} with
$q = 1/(g+1)$ and $k = g$. For the second, let $\mu$ be any sampler on
$\Omega_c$ whose smallest mass falls on an expensive tour, say
$\pi_{\mathbf e_1}$. By Equation~\ref{eq:delta} that mass equals
$1/\bigl((g+1)\Delta(\mu)\bigr)$, so the probability that $M$ draws all miss
$\pi_{\mathbf e_1}$ is at least $1 - M/\bigl((g+1)\Delta(\mu)\bigr)$, which is at
least $1/2$ whenever $M \le (g+1)\Delta(\mu)/2$. On that event the pool is
contained in $\{\pi_{\mathbf 0}, \pi_{\mathbf e_2}, \dots, \pi_{\mathbf e_g}\}$,
which has exactly $g$ elements; any rule returning $g$ distinct tours therefore
returns that set, whose value was bounded above, and if the pool holds fewer
than $g$ distinct tours no rule returns $g$ of them at all.

\paragraph{The Gibbs sampler on this instance.}
Let $R$ be any reward with $R(\pi_{\mathbf 0}) - R(\pi_{\mathbf e_j}) = \Gamma >
0$ for every $j$, as holds for instance when $R = -C$, with $\Gamma = \eta$. The
unnormalized weights of Proposition~\ref{prop:gibbs} are then $e^{\Gamma/\alpha}$
for $\pi_{\mathbf 0}$ and $1$ for each expensive tour, so the smallest mass does
fall on an expensive tour and
\[
  \Delta(\mu_\alpha) \;=\; \frac{e^{\Gamma/\alpha} + g}{g+1} ,
\]
which takes every value above $1$ as $\alpha$ ranges over $(0,\infty)$. The
hypothesis of the second clause of Theorem~\ref{thm:lower} is thus met by every
reward that strictly prefers the cheap tour, at every temperature.

\section{Network architecture}
\label{app:architecture}

The network computing the conditionals of
Equation~\ref{eq:autoregressive-policy} follows \citet{ma2019combinatorial} and
is not a contribution of this work. We record it here for completeness, and
Figure~\ref{fig:model} shows its layout.

\begin{figure}[htbp]
    \centering
    \includegraphics[width=\linewidth]{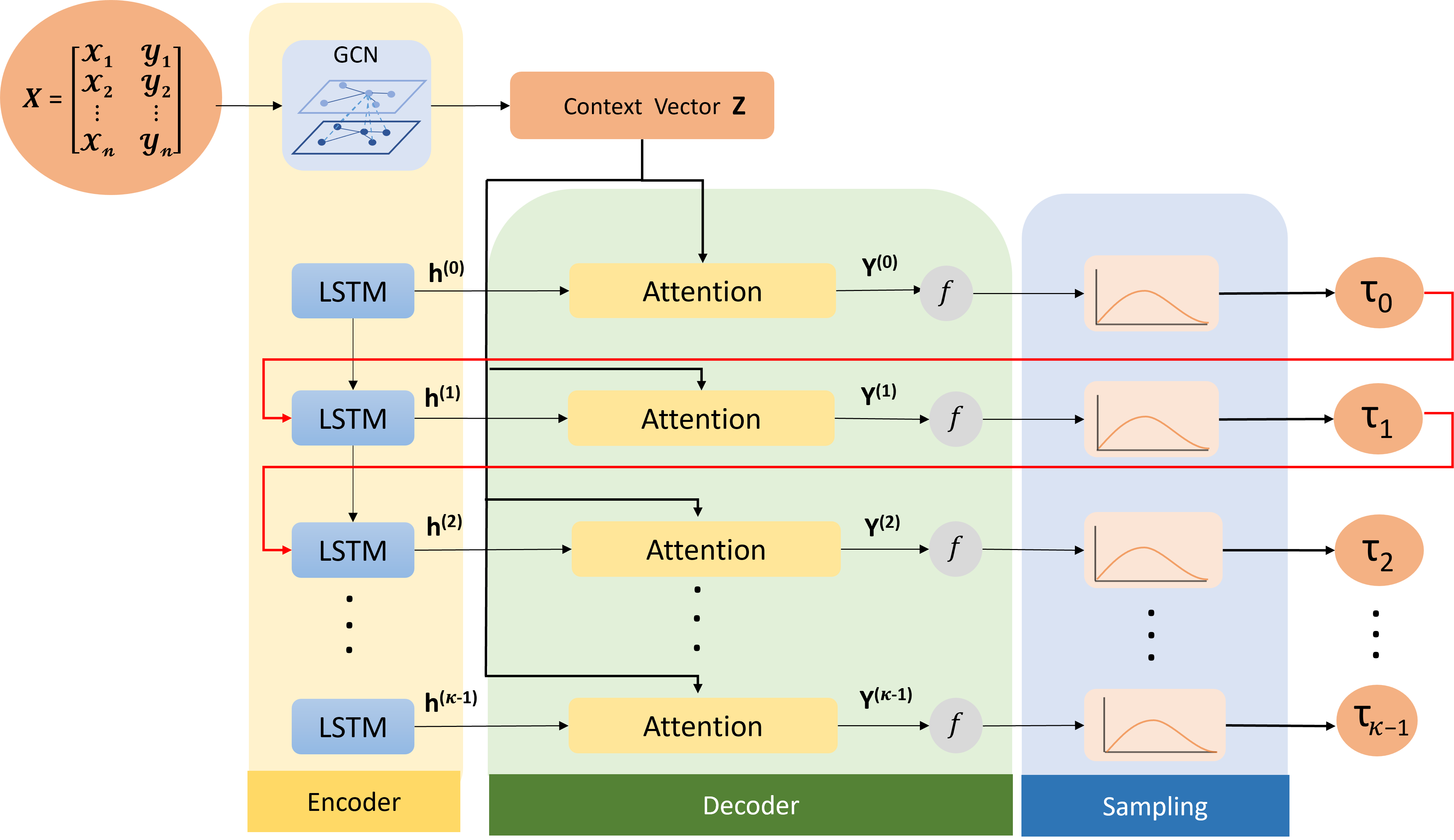}
    \caption{Overview of the network architecture.}
    \label{fig:model}
\end{figure}

Let $\mathbf{X} \in \mathbb{R}^{n \times 2}$ be the matrix of vertex
coordinates. An encoder produces contextual embeddings: the coordinates are
first embedded, then updated over $\ell$ layers of residual message passing,
\begin{equation}
    x_u^{(i)} = \gamma\, x_u^{(i-1)}\Theta
      + (1-\gamma)\,\phi_\theta\!\left(
        \frac{1}{|\mathcal{N}(u)|}
        \bigl\{ x_v^{(i-1)} \bigr\}_{v \in \mathcal{N}(u) \cup \{u\}}
      \right),
    \label{eq:gcn}
\end{equation}
where $\phi_\theta$ is the aggregation function, $\Theta$ a trainable weight
matrix, and $\gamma$ a trainable coefficient balancing a vertex's own previous
representation against its aggregated neighbourhood. Edge embeddings are formed
as $\mathbf{z}_{uv} = \varepsilon_{uv} + x_u^{(\ell)} + x_v^{(\ell)}$, with
$\varepsilon_{uv}$ the initial edge embedding. An LSTM tracks the sequential
state: it reads $\mathbf{z}_{uv}$ for the previously selected edge
$(u,v) = \tau_{t-1}$ and emits a query vector $\mathbf{q}^{(t)}$.

The decoder scores each edge by
\begin{equation}
Y_{uv}^{(t)} = \mathbf{v}^{\!\top} \tanh\bigl(
  \mathbf{W}_{\mathrm{key}}\mathbf{z}_{uv}
  + \mathbf{W}_{\mathrm{query}}\mathbf{q}^{(t)} \bigr),
\label{eq:attention}
\end{equation}
with $\mathbf{W}_{\mathrm{key}}$ and $\mathbf{W}_{\mathrm{query}}$ trainable
matrices and $\mathbf{v}$ a learnable vector. The scores are scaled by a
constant $s$, masked to exclude inadmissible edges, and normalized:
$P^{(t)} = \mathrm{softmax}(\mathrm{mask}(s \cdot Y^{(t)}))$. The embedding of
the selected edge is fed back to the LSTM, and the two modules alternate until
the sequence is complete. The matching policy of Method~2 uses the same network
with $\kappa = |O|/2$ over the odd-degree vertex set $O$ and a mask admitting an
edge whenever both endpoints are still unmatched.

\end{document}